\newcolumntype{C}{>{\centering\arraybackslash}X} 
\newtheorem{thm}{Theorem}
\newtheorem{lem}{Lemma}
\newtheorem{pos}{Proposition}
\newtheorem{proof}{proof}
\begin{document}

\title{IRS Aided Federated Learning: Multiple Access and Fundamental Tradeoff}

\author{Guangji Chen,
        Jun Li,
        Qingqing Wu,
        Yiyang Ni,
        and Meng Hua \vspace{-26pt}
        \thanks{Guangji Chen is with Nanjing University of Science and Technology, Nanjing 210094, China (email: guangjichen@njust.edu.cn). Jun Li is with Southeast University, China (jleesr80@gmail.com). Qingqing Wu is with Shanghai Jiao Tong University, 200240, China (e-mail: qingqingwu@sjtu.edu.cn). Yiyang Ni is with the Jiangsu Second Normal University and Jiangsu Institute of Educational Science Research, Nanjing 210094, China (email: niyy@njupt.edu.cn).  M. Hua is with the Department of Electrical and Electronic Engineering, Imperial College London, London SW7 2AZ, UK  (e-mail: m.hua@imperial.ac.uk).}}

\maketitle
\vspace{-3pt}
\begin{abstract}
This paper investigates an intelligent reflecting surface (IRS) aided wireless federated learning (FL) system, where an access point (AP) coordinates multiple edge devices to train a machine leaning model without sharing their own raw data. During the training process, we exploit the joint channel reconfiguration via IRS and resource allocation design to reduce the latency of a FL task. Particularly, we propose three transmission protocols for assisting the local model uploading from multiple devices to an AP, namely IRS aided time division multiple access (I-TDMA), IRS aided frequency division multiple access (I-FDMA), and IRS aided non-orthogonal multiple access (I-NOMA), to investigate the impact of IRS on the multiple access for FL. Under the three protocols, we minimize the per-round latency subject to a given training loss by jointly optimizing the device scheduling, IRS phase-shifts, and communication-computation resource allocation. For the associated problem under I-TDMA, an efficient algorithm is proposed to solve it optimally by exploiting its intrinsic structure, whereas the high-quality solutions of the problems under I-FDMA and I-NOMA are obtained by invoking a successive convex approximation (SCA) based approach. Then, we further develop a theoretical framework for the performance comparison of the proposed three transmission protocols. Sufficient conditions for ensuring that I-TDMA outperforms I-NOMA and those of its opposite are unveiled, which is fundamentally different from that NOMA always outperforms TDMA in the system without IRS. Simulation results validate our theoretical findings and also demonstrate the usefulness of IRS for enhancing the fundamental tradeoff between the learning latency and learning accuracy.
\end{abstract}

\begin{IEEEkeywords}
IRS, federated learning, multiple access.
\end{IEEEkeywords}

\IEEEpeerreviewmaketitle

\vspace{-10pt}
\section{Introduction}
\vspace{-2pt}
Future wireless networks are expected to be key enables for integrating the functions of communication and computation, with the capability to support a vast of internet-of-things (IoT) applications, such as smart grids, augmented and virtual reality, intelligent industry, and autonomous vehicles. This thus calls for deploying artificial intelligence (AI) in wireless networks as a core of operations for various applications \cite{bouzinis2021wireless}. Conventional AI techniques are generally implemented in a centralized manner, which requires the collection of massive amount of data at the central cloud from massive IoT devices. However, collecting big data definitely causes extremely high traffic loads in wireless networks. Benefited by the recent advances of mobile edge computing (MEC), the growing computational capabilities of edge devices paved the way to conduct distributed learning frameworks for the learning models \cite{letaief2021edge}. Among the distributed learning approaches, federated learning (FL) has shown its potential in satisfying the low-latency demands and preserving the data privacy \cite{li2020federated,10177379}. In a wireless FL system, devices collaboratively train shared learning models by exploiting their local data samples with the coordination of an access point (AP).

The implementation of wireless FL requires the model exchange between the AP and massive devices over hundreds of rounds to achieve a satisfied learning accuracy. Hence, wireless FL faces several critical challenges, which arise from limited communication-computation resources, e.g., the scarce spectrum resources, low energy budget of devices, and unreliable wireless channels \cite{imteaj2021survey}. In light of these issues, substantial works dedicated on the joint communication-computation resource allocation design to improve the performance of wireless FL in terms of various performance metrics, such as learning accuracy \cite{chen2020joint, liu2021joint, ren2020scheduling,wei2024gradient}, convergence latency \cite{chen2020convergence, bouzinis2023wireless, fu2023federated}, and energy efficiency \cite{yang2020energy, kim2022tradeoff, mo2021energy}. In particular, the work \cite{chen2020convergence} developed a joint communication-learning framework to minimize the training loss by jointly optimizing the device scheduling and transmit power based the derived convergence bound. For another challenge of the communication burden induced by the model uploading of massive devices, the exploitation of advanced multiple access schemes can facilitate the improvement of the efficiency for wireless FL \cite{bouzinis2022wireless}. Regarding the multiple access for model uploading, the works \cite{mo2021energy, bouzinis2022wireless} demonstrated the superiority of non-orthogonal multiple access (NOMA) over time division multiple access (TDMA) in terms of both the objectives of latency and energy consumption.

Despite the above theoretical progress, relying only on the communication-computation resource optimization \cite{chen2020joint, liu2021joint, ren2020scheduling,wei2024gradient,chen2020convergence, bouzinis2023wireless, fu2023federated, yang2020energy} may not guarantee the performance of wireless FL due to the severe wireless fading. A large number of communication stragglers arise from unfavorable channels in a FL system may not be scheduled to participate in the training process due to the strict latency requirement, which fundamentally limits the full potential of FL in terms of data utilization \cite{imteaj2021survey}. Although exploiting the high beamforming gain attained by massive multiple-input multiple-output (MIMO) \cite{mu2022federated} can effectively alleviate the straggler issue, it still faces practical challenges, e.g., exceedingly high hardware cost and energy consumption. As a remedy, intelligent reflecting surface (IRS) has been emerged as a cost-effective technology to customize favorable channels via smartly passive refection \cite{wu2021intelligent, di2020smart, chen2023fundamental, liu2021reconfigurable}. Particularly, IRSs are digitally-controlled meta-surfaces comprising massive reflecting elements, which can be tuned dynamically to alter phase of the incident signals and thereby creating a ``smart radio environment''. Besides, other practical advantages of IRSs, such as light weight, low profile, and conformal geometry, make them convenient to be deployed in future wireless networks \cite{di2020smart}. To fully reap the potential benefits provided by the IRS, it is of paramount significance to appropriately optimize the IRS phase-shifts so that the favorable wireless propagation environment is reconfigured for enhancing signal transmission. This new research paradigm has been widely investigated in various wireless applications, e.g.,  multi-user MIMO networks \cite{zhi2021statistical, chen2024intelligent}, MEC systems \cite{bai2021resource, chen2022irs, 10316588,chen2022irs1}, NOMA \cite{mu2021capacity, chen2022active}, integrated sensing and communication \cite{meng2022intelligent,10643002}, and over the air computation (AirComp) \cite{fang2021over, zhai2023simultaneously, chen2024intelligent1}.

In addition to the above applications, it is also appealing to make use of the IRS's high passive beamforming gain for improving the performance of wireless FL. Specifically, by placing IRSs in the vicinity of edge devices, the stragglers' communication qualities can be improved significantly due to the intelligent reflections of IRSs. The mitigation of the straggler issue is beneficial for enhancing the training data exploitation, which is of paramount importance to unlock its full potential for achieving the high-quality learning performance in the wireless edge network. To reap the aforementioned benefits, there are two research lines on IRS aided wireless FL, namely, IRS aided AirComp based FL \cite{wang2021federated, ni2022integrating,zhao2023performance,liu2021reconfigurableFL} and IRS aided digital FL \cite{sun2024reconfigurable, zhang2024convergence, li2023reconfigurable}. Specifically, the first research line \cite{wang2021federated, ni2022integrating,zhao2023performance,liu2021reconfigurableFL} aims to exploit the high passive beamforming gain of IRS to suppress the model aggregation error at the AP, where the waveform superposition of the multiple access channels is employed to achieve fast model aggregation in the analog transmission mode. In contrast, IRS aided digital FL \cite{sun2024reconfigurable, zhang2024convergence, li2023reconfigurable} focuses on the joint optimization of IRS phase-shifts and resource allocation to improve the communication-efficiency of the model uploading links in the digital communication mode.

Despite of these works, several fundamental issues still remain unsolved for the IRS aided digital FL. First, whether employing NOMA based model uploading for wireless FL still outperforms orthogonal multiple access (OMA) or not by considering the channel reconfiguration via the IRS? Benefited by the dynamic phase-shifts adjustment of the IRS, artificial time-varying channels are created by designing IRS phase-shifts over time, which facilitates the utilization of the multi-user diversity. By considering the time-selectivity of the IRS, dedicated IRS reflection pattern can be allocated for each individual device for improving the channel quality under the TDMA-based uploading, whereas a shared IRS reflection pattern is employed for assisting the simultaneous model uploading of all devices via NOMA. Different from existing works on the wireless FL without IRS \cite{mo2021energy, bouzinis2022wireless}, the conclusion on NOMA versus OMA needs to be revisited in the IRS aided FL system. Second, how many IRS elements are needed to enable a full scheduling for all devices under a given latency requirement? It is generally believed that leveraging the smart reflection of the IRS is able to enhance the tradeoff between the learning accuracy and learning latency. This is an essential consideration for understanding the fundamental limit of using IRS to address the straggler issue in the digital wireless FL system.

Motivated by the above issues, we investigate an IRS aided wireless FL system by considering both the NOMA and OMA, where an IRS is deployed to enhance the model uploading from multiple devices to the AP, as shown in Fig. 1. We propose three types of transmission protocols, namely IRS aided TDMA (I-TDMA), IRS aided FDMA (I-FDMA), and IRS aided NOMA (I-NOMA), to pursue a theoretical performance comparison between NOMA and OMA in the IRS aided FL system. Particularly, dedicated IRS reflection pattern is allocated to each scheduled device for the I-TDMA, whereas all the scheduled devices share the same set of IRS phase-shifts under the I-FDMA and I-NOMA. Our objective is to minimize the per-round learning latency subject to a given training loss by jointly optimizing the device scheduling, IRS phase-shifts, and communication-computation resource allocation. Different from conventional FL system without IRS where wireless channels are uncontrollable and remain static in a channel coherence block, favourable time-varying channels can be proactively generated in an IRS aided FL system to enhance the multi-user diversity, which thus enables a flexible resource allocation and has a significant impact on the performance of multiple access schemes. The main contributions are summarized as follows.

\begin{figure}[!t]
\centering
\includegraphics[width= 0.5\textwidth]{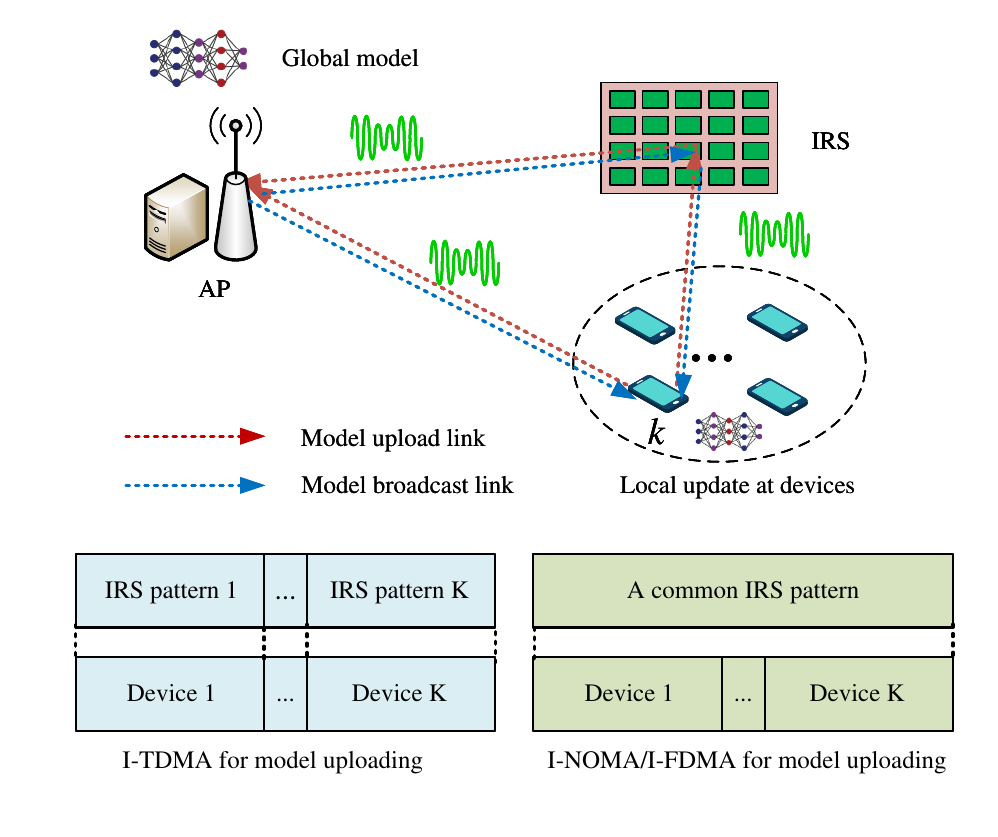}
\DeclareGraphicsExtensions.
\vspace{-28pt}
\caption{IRS aided wireless federated learning with OMA and NOMA.}
\label{model}
\vspace{-12pt}
\end{figure}

\begin{itemize}
  \item First, we propose efficient algorithms to solve the associated latency minimization problems under the three proposed transmission protocols. For the I-TDMA scheme, the device scheduling policy is derived by capturing both the effects of the IRS aided channels and the amount of local data samples. Then, the optimal solution of I-TDMA is obtained based on the device scheduling policy. For the I-FDMA and I-NOMA, we propose efficient algorithms to obtain their high-quality solutions by invoking successive convex approximation (SCA) techniques.
  \item Next, to shed light on the fundamental tradeoff between the learning latency and learning accuracy, we extend the above optimization framework to solve the problems of training loss minimization subject to a given latency requirement. We further derive the required number of IRS elements to enable a full scheduling for all devices, which unveils the usefulness of the IRS to achieve a lossless FL model.
  \item Finally, we provide a theoretical framework for the performance comparison for the proposed three transmission protocols. We unveil that both the I-TDMA and I-NOMA outperform I-FDMA, whereas the comparison results of the I-TDMA and I-NOMA depend on the specific channel structures. Particularly, sufficient conditions for ensuring that I-TDMA outperforms I-NOMA and those of its opposite are derived. Simulation results corroborate our theoretical findings and also demonstrate the superiority of the proposed scheme over the benchmark schemes.
\end{itemize}

The rest of this paper is organized as follows. Section II presents the system model of the IRS aided wireless FL. In Section III, we provide the convergence anaylsis and problem formulations. Section IV proposes efficient algorithms to solve the formulating problems and discusses the usefulness of IRS to enable a full device scheduling. Section V provides a theoretical performance comparison of the proposed three transmission protocols. Section VI provides simulation results to evaluate the proposed designs. Finally, we conclude in Section VII.
\vspace{-10pt}
\section{System model}
As illustrated in Fig. \ref{model}, we consider an IRS aided FL system, which consists of a single antenna AP, an IRS, and $K$ single-antenna edge devices. The IRS is equipped with $N$ elements. For convenience, the sets of IRS elements and edge devices are denoted by ${\cal N} \buildrel \Delta \over = \left\{ {1, \ldots ,N} \right\}$ and ${\cal K} \buildrel \Delta \over = \left\{ {1, \ldots ,K} \right\}$, respectively. In this system, the AP with an edge server coordinates $K$ edge devices to learn a shared ML model by exploiting their local data. Let ${{\cal D}_k}$ denote the local dataset of device $k$ with size ${D_k}$. The global model at the AP is denoted by ${\bf{w}} \in {\mathbb{R}^d}$. Accordingly, the local loss function at device $k$ and the global loss function at the AP can be expressed as
\begin{align}\label{local_loss_function}
{F_k}\left( {\bf{w}} \right) = \frac{1}{{{D_k}}}\sum\nolimits_{i = 1}^{{D_k}} {{f_i}\left( {\bf{w}} \right)}, F\left( {\bf{w}} \right) = \frac{1}{D}\sum\nolimits_{k = 1}^K {{F_k}\left( {\bf{w}} \right)}.
\end{align}
respectively, where ${{f_i}\left( {\bf{w}} \right)}$ denotes the loss function of each data sample $i \in {{\cal D}_k}$ and $D = \sum\nolimits_{k = 1}^K {{D_k}}$ represents the total size of data. For edge FL, the learning objective is to obtain a desirable ML model that minimizes the global loss function, i.e.,
\begin{align}\label{learning_task}
{{\bf{w}}^*} = \mathop {\arg \min }\limits_{\bf{w}} F\left( {\bf{w}} \right).
\end{align}
To this end, we adopt batch gradient descent \cite{chen2020joint} which is implemented in a distributed way for updating local model. In particular, the global model ${\bf{w}}$ is updated iteratively with $T$ training rounds, denoted by ${\cal T} \buildrel \Delta \over = \left\{ {1, \ldots ,T} \right\}$. For the $t$-th round, $t \in {\cal T}$, the detailed procedure is implemented as follows:

1) \emph{Device scheduling}: The AP determines a subset of edge devices ${\cal K}_a^t \subseteq {\cal K}$  to participate in the learning process. Let $a_k^t \in \left\{ {0,1} \right\}$ denote the scheduling variable of device $k$. Therein, $a_k^t = 1$ indicates that device $k$ is scheduled by the AP in the $t$-th round; otherwise we have $a_k^t = 0$. Accordingly, the set of scheduled devices can be expressed as ${\cal K}_a^t \buildrel \Delta \over = \left\{ {k:a_k^t = 1,k \in {\cal K}} \right\}$.

2) \emph{Global model broadcast}: Then, the AP broadcasts the global model obtained in the previous round, denoted by ${{\bf{w}}_{t - 1}}$, to all the scheduled devices.

3) \emph{Local model computation}: After receiving the global model, each scheduled device $k \in {\cal K}_a^t$ computes its local model ${{\bf{w}}_{k,t}}$ by employing the gradient descent algorithm based on its local dataset:
\begin{align}\label{local_model_computation}
{{\bf{w}}_{k,t}} = {{\bf{w}}_{t - 1}} - \eta \nabla {F_k}\left( {{{\bf{w}}_{t - 1}}} \right),\forall k \in {\cal K}_a^t,
\end{align}
where $\eta$ denotes the learning rate and ${\nabla {f_i}\left( {{{\bf{w}}_{t - 1}}} \right)}$ represents the gradient of ${{f_i}\left( {{{\bf{w}}_{t - 1}}} \right)}$ with respect to ${\bf{w}}$ at the point ${{{\bf{w}}_{t - 1}}}$.

4) \emph{Local model uploading and aggregation}: Each scheduled device $k \in {\cal K}_a^t$ uploads its updated local model ${{\bf{w}}_{k,t}}$ to the AP via the wireless channel. After receiving all scheduled devices' local models, the AP aggregates them to obtain the updated global model as
\begin{align}\label{global_model_computation}
{{\bf{w}}_t} = \frac{{\sum\nolimits_{k = 1}^K {a_k^t{D_k}{{\bf{w}}_{k,t}}} }}{{\sum\nolimits_{k = 1}^K {a_k^t{D_k}} }},
\end{align}
where ${{\bf{w}}_t}$ is the global model obtained at the AP in the $t$-th round.

Considering that the AP has a higher computation capability than that of the edge devices, the time used to global model aggregation is negligible compared to that of local  model computation. Additionally, the time for  global model broadcasting is also much less than that for local model uploading since the AP has a higher transmit power while also the same information is broadcasted to all devices. Hence, we focus on the stage of local model computation and local model uploading. In the next subsection, we introduce the associated computation and communication process of FL over IRS aided networks.

\vspace{-10pt}
\subsection{Computation Model of Local Computation}
Let ${C_k}$ denote the number of CPU cycles required for computing one data sample of device $k$, which can be obtained as a prior by measuring it offline. Hence, the required number of CPU cycles for running one local round is ${C_k}{D_k}$. Let ${f_k}$ denote the CPU frequency at device $k$. Then, the computation time of device $k$ in one local round can be expressed as
\begin{align}\label{local_computation_time}
\tau_k^{{\rm{loc}}} = \frac{{{C_k}{D_k}}}{{{f_k}}}, \forall k \in {\cal K}.
\end{align}
Accordingly, the energy consumption of device $k$ for the local model computation is
\begin{align}\label{local_computation_energy}
E_k^{{\rm{loc}}} = \xi f_k^3\tau_k^{{\rm{loc}}} = \xi {C_k}{D_k}f_k^2,\forall k \in {\cal K},
\end{align}
where $\xi$ is a constant related to the hardware architecture of device $k$. Similar to existing works \cite{bouzinis2023wireless, fu2023federated, yang2020energy}, the synchronous operation is considered and thereby all scheduled devices train their local models simultaneously. Therefore, the time consumed for local model computation in the $t$-th training round can be written as
\begin{align}\label{local_time}
\tau_t^{{\rm{loc}}} = \mathop {\max }\limits_{k \in {\cal K}} \left\{ {a_k^t\tau_k^{{\rm{loc}}}} \right\}.
\end{align}
\vspace{-10pt}
\subsection{Communication Model of Local Model Uploading}
To characterize the achievable performance upper bound of the IRS aided local model uploading, we assume that the CSI of the involved channels can be obtained perfectly at the AP by employing the channel acquisition schemes discussed in \cite{wu2021intelligent}. The equivalent channels from device $k$ to the AP, from device $k$ to the IRS, and from the IRS to the AP are denoted by ${h_{d,k}} \in \mathbb{C}$, ${{\bf{h}}_{r,k}} \in { \mathbb{C}^{N \times 1}}$, and ${{\bf{g}}^H} \in {\mathbb{C}^{1 \times N}}$, respectively, where $k \in {\cal K}$.

We aim to investigate a theoretical performance comparison between OMA and NOMA in an IRS aided FL system, where two types of OMA schemes, i.e, TDMA and FDMA, are considered. To this end, three IRS aided multiple access schemes are considered for the local model uploading, namely I-TDMA, I-FDMA, and I-NOMA, respectively.
\subsubsection{I-TDMA Based Transmission}
For the I-TDMA based transmission scheme, the scheduled devices transmit their updated local models to the AP over orthogonal time slots (TSs). Let $\tau _k^{{\mathop{\rm c}\nolimits} }$ denote the local model uploading duration for device $k \in {\cal K}$ and thereby the total transmission duration is $\tau _{\rm{T}}^{{\rm{c}}} = \sum\nolimits_{k = 1}^K {\tau _k^{{\rm{c}}}}$. For the $k$-th TS of device $k$, a dedicated IRS reflection pattern, denoted by ${{\bf{\Theta }}_k} = {\mathop{\rm diag}\nolimits} \left( {{e^{j{\theta _{k,1}}}}, \ldots ,{e^{j{\theta _{k,N}}}}} \right)$, is employed to assist its uplink transmission, where ${\theta _{k,n}} \in \left[ {0,2\pi } \right),\forall k,n$. Denoting the transmit power of device $k$ as $p_k$ and then the achievable data rate of device $k$ under I-TDMA is given by
\begin{align}\label{rate_k_TDMA}
r_k^{\rm{T}} = B{\log _2}\left( {1 + \frac{{{p_k}{{\left| {{h_{d,k}} + {{\bf{g}}^H}{{\bf{\Theta }}_k}{{\bf{h}}_{r,k}}} \right|}^2}}}{{B{\sigma ^2}}}} \right), \forall k \in {\cal K},
\end{align}
where ${{\sigma ^2}}$ is the noise power density and $B$ denotes the total bandwidth. To successfully upload the local model of device $k$ in the $k$-th TS, the condition $r_k^{\rm{T}}\tau _k^{{\rm{c}}} \ge sa_k^t,\forall k \in {\cal K}$
should be satisfied, where $s$ denotes the local model size (i.e., the number of bits) of each device. Accordingly, the energy consumption of device $k$ for local model uploading is $E_{k,{\rm{T}}}^{{\rm{com}}} = \tau _k^{{\rm{c}}}{p_k},\forall k \in {\cal K}$.
\subsubsection{I-FDMA Based Transmission}
For the I-FDMA based transmission scheme, the total bandwidth $B$ is partitioned into multiple orthogonal sub-bands and each of them is assigned to a device, with a bandwidth of ${b_k}B$, $\forall k \in {\cal K}$, where ${b_k}$ denotes the bandwidth allocation coefficient and satisfies $\sum\nolimits_{k = 1}^K {{b_k} \le 1}$. Let $\tau _{\rm{F}}^{{\rm{c}}}$ denote the total transmission duration of I-FDMA. During the time interval of $\tau _{\rm{F}}^{{\rm{c}}}$, a common IRS refection pattern ${\bf{\Theta }} = {\mathop{\rm diag}\nolimits} \left( {{e^{j{\theta _1}}}, \ldots ,{e^{j{\theta _N}}}} \right)$ with ${\theta _n} \in \left[ {0,2\pi } \right),\forall n$, is employed. Accordingly, the achievable data rate of device $k$ under I-FDMA can be expressed as
\begin{align}\label{rate_k_FDMA}
r_k^{\rm{F}} \!=\! {b_k}B{\log _2}\left( {1 + \frac{{{p_k}{{\left| {{h_{d,k}} + {{\bf{g}}^H}{\bf{\Theta }}{{\bf{h}}_{r,k}}} \right|}^2}}}{{{b_k}B{\sigma ^2}}}} \right),\forall k \in {\cal K},
\end{align}
Then, the condition $r_k^{\rm{F}}\tau _{\rm{F}}^{{\rm{c}}} \ge sa_k^t$
should be satisfied to ensure that the local model of device $k$ can be uploaded successfully. The associated energy consumption of each device under the I-FDMA is $E_{k,{\rm{F}}}^{{\rm{com}}} = \tau _{\rm{F}}^{{\rm{c}}}{p_k},\forall k \in {\cal K}$.

\subsubsection{I-NOMA Based Transmission}
For the I-NOMA based transmission scheme, all scheduled devices transmit their local models to the AP simultaneously. The total transmission duration of I-NOMA is denoted by $\tau _{\rm{N}}^{{\rm{c}}}$ and a common IRS reflection pattern ${\bf{\Theta }}$ is shared by all scheduled devices. By employing the SIC technique at the AP and allowing time sharing among different decoding orders, the achievable rate region of all devices under I-NOMA is given by
\begin{align}\label{capacity_region}
{{\cal R}_{\rm{N}}}\left( {{\bf{p}},{\bf{\Theta }}} \right) \!\!=\!\! \left\{ \begin{array}{l}
{{\bf{r}}^N} \in \mathbb{R}_ + ^{K \times 1}:\sum\limits_{k \in {\cal J}} {r_k^N} \\
 \le B{\log _2}\left( {1 + \frac{{\sum\nolimits_{k \in {\cal J}} {{p_k}{{\left| {{h_k}\left( {\bf{\Theta }} \right)} \right|}^2}} }}{{B{\sigma ^2}}}} \right),\forall {\cal J} \subseteq {\cal K}
\end{array} \right\},
\end{align}
where ${\bf{p}} \buildrel \Delta \over = {\left[ {{p_1}, \ldots {p_K}} \right]^T}$, ${{\bf{r}}^N} \buildrel \Delta \over = {\left[ {r_1^{\rm{N}}, \ldots r_K^{\rm{N}}} \right]^T}$, ${h_k}\left( {\bf{\Theta }} \right) = {h_{d,k}} + {{\bf{g}}^H}{\bf{\Theta }}{{\bf{h}}_{r,k}}$, and ${\cal J}$ represents any subset contained in set ${\cal K}$. Under the I-NOMA scheme, the corresponding condition for that device $k$ successfully upload its local model is $r_k^{\rm{N}}\tau _{\rm{N}}^{{\rm{c}}} \ge sa_k^t$.
The communication energy of I-NOMA at device $k$ is thus given by $E_{k,{\rm{N}}}^{{\rm{c}}} = \tau _{\rm{N}}^{{\rm{c}}}{p_k},\forall k \in {\cal K}$.
\vspace{-10pt}
\section{Convergence Analysis and Problem Formulation}
This section provides the convergence analysis of the IRS aided FL, which captures the impact of the number of scheduled devices on the convergence performance. Based on the convergence analysis, the per-round latency minimization problems are formulated under proposed three transmission schemes.
\vspace{-10pt}
\subsection{Convergence Analysis}
Following the existing work \cite{liu2021reconfigurableFL}, we make several assumptions regarding the loss function and  local gradients.

\emph{Assumption 1 (Smoothness)}: The loss function $F\left( {\bf{w}} \right)$ is $L$-smooth, i.e., $\forall {{\bf{w}}_1},{{\bf{w}}_2} \in {\mathbb{R}^d}$, there exists a non-negative constant $L$, such that
\begin{align}\label{assumption1}
F\left( {{{\bf{w}}_2}} \right)\!\! \le\!\! F\left( {{{\bf{w}}_1}} \right) \!\!+\!\! \nabla F{\left( {{{\bf{w}}_1}} \right)^T}\left( {{{\bf{w}}_2} \!\!-\!\! {{\bf{w}}_1}} \right) \!\!+\!\! \frac{L}{2}\left\| {{{\bf{w}}_2} \!\!-\!\! {{\bf{w}}_1}} \right\|_2^2.
\end{align}

\emph{Assumption 2 (Polyak-Lojasiewicz Inequality)}: Let ${F^*}$ denote the optimal objective value of problem \eqref{learning_task}. There exists a positive constant $\delta  > 0$ such that $F\left( {\bf{w}} \right), \forall {\bf{w}} \in {\mathbb{R}^d}$, satisfies
\begin{align}\label{assumption2}
\left\| {\nabla F\left( {\bf{w}} \right)} \right\|_2^2 \ge 2\delta \left( {F\left( {\bf{w}} \right) - {F^*}} \right).
\end{align}

\emph{Assumption 3 (Bounded Gradient Norm)}: There exists a positive constant $\varepsilon > 0$ such that the sample-wise gradients of edge devices are bounded by
\begin{align}\label{assumption3}
\left\| {\nabla {f_i}\left( {\bf{w}} \right)} \right\|_2^2 \le \varepsilon ,\forall i \in {{\cal D}_k}, k \in {\cal K}.
\end{align}

Based on the above assumptions, we aim to characterize the optimality gap of a FL task after $T$ training rounds. To this end, we first introduce the following lemma.
\begin{lem}
With $\eta  = 1/L$, we have
\begin{align}\label{gap1}
F\left( {{{\bf{w}}_{t + 1}}} \right) \le F\left( {{{\bf{w}}_t}} \right) - \frac{1}{{2L}}\left\| {\nabla F\left( {{{\bf{w}}_t}} \right)} \right\|_2^2 + \frac{1}{{2L}}\left\| {{{\bf{e}}_t}} \right\|_2^2,
\end{align}
\end{lem}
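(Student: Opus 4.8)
The plan is to read \eqref{gap1} as the standard one-step descent inequality for gradient descent applied to the global objective $F$, with the penalty for partial device participation collected into $\mathbf{e}_t$. First I would make the effective update rule explicit: substituting the local step \eqref{local_model_computation} into the aggregation \eqref{global_model_computation} (with the round index shifted so that $\mathbf{w}_{t+1}$ is formed from $\mathbf{w}_t$) gives $\mathbf{w}_{t+1} = \mathbf{w}_t - \eta\, \mathbf{g}_t$, where $\mathbf{g}_t = \big(\sum_{k} a_k^{t+1} D_k \nabla F_k(\mathbf{w}_t)\big)\big/\big(\sum_{k} a_k^{t+1} D_k\big)$ is the data-weighted average of the scheduled devices' local gradients. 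I then define the aggregation error as $\mathbf{e}_t = \mathbf{g}_t - \nabla F(\mathbf{w}_t)$, i.e. the gap between the descent direction actually realized by the scheduled set and the true global gradient; this is the quantity whose norm appears in \eqref{gap1}.

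Next I would invoke the $L$-smoothness of $F$ in \eqref{assumption1} with $\mathbf{w}_1 = \mathbf{w}_t$ and $\mathbf{w}_2 = \mathbf{w}_{t+1}$, and substitute $\mathbf{w}_{t+1} - \mathbf{w}_t = -\eta\, \mathbf{g}_t$, obtaining
\[
F(\mathbf{w}_{t+1}) \le F(\mathbf{w}_t) - \eta\, \nabla F(\mathbf{w}_t)^T \mathbf{g}_t + \frac{L\eta^2}{2}\|\mathbf{g}_t\|_2^2 .
\]
Writing $\mathbf{g}_t = \nabla F(\mathbf{w}_t) + \mathbf{e}_t$ and expanding the inner product as $\|\nabla F(\mathbf{w}_t)\|_2^2 + \nabla F(\mathbf{w}_t)^T \mathbf{e}_t$ and the squared norm as $\|\nabla F(\mathbf{w}_t)\|_2^2 + 2\nabla F(\mathbf{w}_t)^T \mathbf{e}_t + \|\mathbf{e}_t\|_2^2$ reorganizes the bound into coefficients multiplying $\|\nabla F(\mathbf{w}_t)\|_2^2$, the cross term $\nabla F(\mathbf{w}_t)^T \mathbf{e}_t$, and $\|\mathbf{e}_t\|_2^2$.

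The decisive move is to impose $\eta = 1/L$. With this choice the coefficient of the cross term is $-\eta + L\eta^2 = 0$, so it cancels exactly; the coefficient of $\|\nabla F(\mathbf{w}_t)\|_2^2$ becomes $-\eta + L\eta^2/2 = -1/(2L)$, and that of $\|\mathbf{e}_t\|_2^2$ becomes $L\eta^2/2 = 1/(2L)$, which is precisely \eqref{gap1}. I do not anticipate a genuine obstacle, as this is a textbook descent argument; the only delicate point is the exact vanishing of the cross term, which relies entirely on the specific step size $\eta = 1/L$ and is what lets the inequality hold without any auxiliary bound on $\nabla F(\mathbf{w}_t)^T \mathbf{e}_t$. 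Note that Assumptions 2 and 3 (the Polyak-Lojasiewicz inequality \eqref{assumption2} and the bounded gradient norm \eqref{assumption3}) are not needed here; they would enter only later, to convert the per-round bound \eqref{gap1} into an optimality-gap guarantee after $T$ rounds.
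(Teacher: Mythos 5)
Your proof is correct and is exactly the standard descent-lemma computation that underlies the result: the paper itself gives no derivation but simply cites Lemma 2.1 of Friedlander and Schmidt, whose proof is the same $L$-smoothness expansion with the cross term vanishing at $\eta = 1/L$. The only cosmetic difference is your sign convention $\mathbf{e}_t = \mathbf{g}_t - \nabla F(\mathbf{w}_t)$ versus the paper's \eqref{gradient_error}, which is immaterial since only $\left\| \mathbf{e}_t \right\|_2^2$ appears.
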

where
\begin{align}\label{gradient_error}
{{\bf{e}}_t} = \nabla F\left( {{{\bf{w}}_t}} \right) - \frac{{\sum\nolimits_{k = 1}^K {a_k^t{D_k}\nabla {F_k}\left( {{{\bf{w}}_t}} \right)} }}{{\sum\nolimits_{k = 1}^K {a_k^t{D_k}} }}
\end{align}
denotes the gradient error due to the device scheduling.
\begin{proof}
Please refer to Lemma 2.1 in \cite{friedlander2012hybrid}.
\end{proof}
With the help of Lemma 1, an upper bound of the optimality gap is provided in the following theorem.
\begin{thm}
With assumptions 1-3 and $\eta  = 1/L$, the optimality gap after $T$ training rounds is upper bounded by
\begin{align}\label{optimality_gap}
&F\left( {{{\bf{w}}_T}} \right) - {F^*}\nonumber\\
&\le {\left( {1 - \frac{\delta }{L}} \right)^T}\left( {F\left( {{{\bf{w}}_0}} \right) - {F^*}} \right) + \sum\limits_{t = 1}^T {{A_t}{{\left( {1 - \frac{\delta }{L}} \right)}^{T - t}}} ,
\end{align}
where
\begin{align}\label{e_t_bound}
{A_t} = \frac{{2\varepsilon }}{{L{D^2}}}{\left( {\sum\nolimits_{k = 1}^K {\left( {1 - a_k^t} \right){D_k}} } \right)^2}.
\end{align}
\end{thm}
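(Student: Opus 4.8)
The plan is to turn the one-step descent inequality of Lemma 1 into a contractive linear recursion for the optimality gap $\Delta_t \triangleq F\left({\bf{w}}_t\right) - F^*$, bound the scheduling-induced gradient error $\left\|{\bf{e}}_t\right\|_2^2$ by the quantity appearing in $A_t$, and then unroll the recursion over the $T$ rounds. First I would invoke Assumption 2 to lower bound the negative gradient term in Lemma 1: since $\left\|\nabla F\left({\bf{w}}_{t-1}\right)\right\|_2^2 \ge 2\delta \Delta_{t-1}$, substituting into Lemma 1 (indexed so that round $t$ maps ${\bf{w}}_{t-1}$ to ${\bf{w}}_t$ under scheduling $a_k^t$) and subtracting $F^*$ from both sides yields
\begin{align}
\Delta_t \le \left(1 - \frac{\delta}{L}\right)\Delta_{t-1} + \frac{1}{2L}\left\|{\bf{e}}_t\right\|_2^2. \nonumber
\end{align}
This already exposes the contraction factor $1 - \delta/L$, so the remaining task is to show $\frac{1}{2L}\left\|{\bf{e}}_t\right\|_2^2 \le A_t$.

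The core of the argument is the gradient-error bound. Writing $\bar D_t \triangleq \sum_{k=1}^K \left(1-a_k^t\right)D_k$ for the aggregate data of the unscheduled devices and $D_a^t \triangleq \sum_{k=1}^K a_k^t D_k = D - \bar D_t$, and using $\nabla F\left({\bf{w}}\right) = \frac{1}{D}\sum_{k=1}^K D_k \nabla F_k\left({\bf{w}}\right)$, I would split the summation in ${\bf{e}}_t$ into scheduled and unscheduled parts and regroup as
\begin{align}
{\bf{e}}_t = \left(\frac{1}{D} - \frac{1}{D_a^t}\right)\sum_{k:a_k^t=1} D_k \nabla F_k\left({\bf{w}}_t\right) + \frac{1}{D}\sum_{k:a_k^t=0} D_k \nabla F_k\left({\bf{w}}_t\right). \nonumber
\end{align}
Since $\nabla F_k = \frac{1}{D_k}\sum_{i} \nabla f_i$, Assumption 3 together with the triangle inequality gives $\left\|\nabla F_k\right\|_2 \le \sqrt{\varepsilon}$ for every $k$. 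Using $\frac{1}{D} - \frac{1}{D_a^t} = -\bar D_t/\left(D D_a^t\right)$ and applying the triangle inequality to each regrouped term, the first term is bounded by $\frac{\bar D_t}{D D_a^t}\cdot\sqrt{\varepsilon}\,D_a^t = \frac{\sqrt{\varepsilon}\,\bar D_t}{D}$ and the second by $\frac{1}{D}\cdot\sqrt{\varepsilon}\,\bar D_t = \frac{\sqrt{\varepsilon}\,\bar D_t}{D}$. Hence $\left\|{\bf{e}}_t\right\|_2 \le \frac{2\sqrt{\varepsilon}\,\bar D_t}{D}$, so that $\left\|{\bf{e}}_t\right\|_2^2 \le \frac{4\varepsilon}{D^2}\bar D_t^2$ and $\frac{1}{2L}\left\|{\bf{e}}_t\right\|_2^2 \le \frac{2\varepsilon}{L D^2}\bar D_t^2 = A_t$.

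Finally, I would substitute this bound back and unroll the recursion $\Delta_t \le \left(1-\delta/L\right)\Delta_{t-1} + A_t$ from $t=T$ down to $t=1$, which produces the geometric weighting $\Delta_T \le \left(1-\delta/L\right)^T \Delta_0 + \sum_{t=1}^T A_t \left(1-\delta/L\right)^{T-t}$, i.e.\ the claimed bound. I expect the main obstacle to be the gradient-error decomposition: one must correctly separate the error into the reweighting of the scheduled gradients and the omission of the unscheduled gradients, and then track the two equal contributions of magnitude $\sqrt{\varepsilon}\,\bar D_t/D$ so as to recover the factor of $4$ (equivalently the factor of $2$ before squaring) that makes $\frac{1}{2L}\left\|{\bf{e}}_t\right\|_2^2$ match $A_t$ exactly; everything else is a routine application of the PL inequality and a standard unrolling of a first-order linear recursion.
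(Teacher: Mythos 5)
Your proof is correct and follows the route the paper intends: the paper itself only cites Appendix B of an external reference for this theorem, and the standard argument there is exactly what you reconstruct --- combine Lemma 1 with the PL inequality to get the contraction $\Delta_t \le (1-\delta/L)\Delta_{t-1} + \tfrac{1}{2L}\|{\bf e}_t\|_2^2$, bound $\|{\bf e}_t\|_2$ by $2\sqrt{\varepsilon}\,\bar D_t/D$ via the scheduled/unscheduled decomposition and Assumption 3, and unroll. Your gradient-error bound recovers $A_t$ exactly, so nothing is missing; the only implicit step worth a one-line remark is that $0 \le 1-\delta/L < 1$ (which follows from $\delta \le L$ for an $L$-smooth PL function) so that the recursion can be unrolled without flipping inequalities.
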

\begin{proof}
The proof is similar to Appendix B of \cite{liu2021reconfigurableFL}.
\end{proof}

It is observed from Theorem 1 that the first term in \eqref{optimality_gap} tends to zero with rate ${\cal{O}}\left( {1/T} \right)$. However, the second term in \eqref{optimality_gap} is related to the model error induced by the device scheduling, which creates a additional gap between $F\left( {{{\bf{w}}_T}} \right)$ and ${F^*}$. By scheduling all edge devices, i.e., $a_k^t = 1,\forall k$, the second term in \eqref{optimality_gap} tends to be zero and the optimality gap in Theorem 1 reduces to that of a lossless FL model. However, scheduling all edge devices may increase the communication latency in the local model uploading stage. Hence, there exists a fundamental tradeoff between the learning accuracy and the total convergence time, which will be studied later.
\vspace{-10pt}
\subsection{Problem Formulation}
Given the number of learning rounds $T$, our goal is to minimize the overall latency of the FL task, under the optimality gap requirements and the energy constraints. Since the future CSI is unknown, we focus on minimizing the latency in each learning round under the availability of the current CSI. Moreover, it is noticed that the upper bound of the optimality gap in Theorem 1 monotonically increases with respect to ${A_t}$. It is obvious that increasing the number of scheduled devices can reduce the value of ${A_t}$, thereby suppressing the associated optimality gap. However, scheduling more devices results in both the increased transmission latency and the computational latency. Taking these issues into consideration, it is crucial to balance the fundamental tradeoff between the learning accuracy and learning latency. To this end, the per-round latency minimization problem by jointly optimizing the IRS phase-shift and the communication-computation resource allocation under the I-TDMA scheme is formulated as follows
\begin{subequations}\label{C1}
\begin{align}
\label{C1-a}\mathop {\min }\limits_{\left\{ {{{\bf{\Theta }}_k}} \right\},{\bf{a}},{{\bm{\tau }}^{\rm{T}}},{\bf{p}},{\bf{f}}} &\sum\nolimits_{k = 1}^K {\tau _k^{{\rm{c}}}}  + \tau _t^{{\rm{loc}}}\\
\label{C1-b}{\rm{s.t.}}\;\;\;\;\;\;&{p_k}\tau _k^{{\rm{c}}} + \xi {C_k}{D_k}f_k^2 \le E_k^{\max },~\forall k,\\
\label{C1-c}&B\tau _k^{{\rm{c}}}{\log _2}\left( {1 \!+\! \frac{{{p_k}{{\left| {{h_k}\left( {{{\bf{\Theta }}_k}} \right)} \right|}^2}}}{{B{\sigma ^2}}}} \right) \!\ge \! sa_k^t, \forall k,\\
\label{C1-d}&\tau _t^{{\rm{loc}}} \ge a_k^t\frac{{{C_k}{D_k}}}{{{f_k}}}, ~\forall k,\\
\label{C1-e}&\frac{{2\varepsilon }}{{L{D^2}}}{\left( {\sum\nolimits_{k = 1}^K {\left( {1 - a_k^t} \right){D_k}} } \right)^2} \le \kappa,\\
\label{C1-f}&{p_k} \ge 0, {f_k} \ge 0, \forall k,\\
\label{C1-g}&a_k^t \in \left\{ {0,1} \right\},~\forall k,\\
\label{C1-h}&\left| {{{\left[ {{{\bf{\Theta }}_k}} \right]}_{n,n}}} \right| = 1,~\forall k,\forall n,
\end{align}
\end{subequations}
where ${h_k}\left( {{{\bf{\Theta }}_k}} \right) = {h_{d,k}} + {{\bf{g}}^H}{{\bf{\Theta }}_k}{{\bf{h}}_{r,k}}$, ${\bf{a}} = \left[ {a_1^t, \ldots ,a_K^t} \right]$, ${{\bm{\tau }}^{\rm{T}}} = \left[ {\tau _1^{\rm{c}}, \ldots ,\tau _K^{\rm{c}},\tau _t^{{\rm{loc}}}} \right]$, ${\bf{f}} = \left[ {{f_1}, \ldots ,{f_K}} \right]$, and ${\bf{p}} = \left[ {{p_1}, \ldots ,{p_K}} \right]$. For problem \eqref{C1}, constraint \eqref{C1-b} indicates devices' energy budget. Constraint \eqref{C1-c} is obtained according to \eqref{rate_k_TDMA} and the condition $r_k^{\rm{T}}\tau _k^{{\rm{c}}} \ge sa_k^t,\forall k \in {\cal K}$. Constraint \eqref{C1-d} specifies the local training latency requirement, and constraint \eqref{C1-e} characterizes the target learning accuracy requirement. The corresponding per-round latency minimization problems with I-FDMA and I-NOMA can be similarly formulated as
\begin{subequations}\label{C2}
\begin{align}
\label{C2-a}\mathop {\min }\limits_{{\bf{\Theta }},{\bf{a}},{\bf{b}},{{\bm{\tau }}^{\rm{F}}},{\bf{p}},{\bf{f}}} \;&\tau _{\rm{F}}^{\rm{c}} + \tau _t^{{\rm{loc}}}\\
\label{C2-b}{\rm{s.t.}}\;\;\;\;\;\;&{p_k}\tau _{\rm{F}}^{\rm{c}} + \xi {C_k}{D_k}f_k^2 \le E_k^{\max },~\forall k,\\
\label{C2-c}&B{b_k}\tau _{\rm{F}}^{\rm{c}}{\log _2}\left( {1 \!+\! \frac{{{p_k}{{\left| {{h_k}\left( {{{\bf{\Theta }}}} \right)} \right|}^2}}}{{B{b_k}{\sigma ^2}}}} \right) \!\ge\! sa_k^t, \forall k,\\
\label{C2-d}&\sum\nolimits_{k = 1}^K {{b_k} \le 1},{b_k} \ge 0, ~\forall k,\\
\label{C2-e}&\left| {{{\left[ {{{\bf{\Theta }}}} \right]}_{n,n}}} \right| = 1,~\forall n,\\
\label{C2-f}&\eqref{C1-d},\eqref{C1-e},\eqref{C1-f},\eqref{C1-g},
\end{align}
\end{subequations}
\vspace{-8pt}
\begin{subequations}\label{C3}
\begin{align}
\label{C3-a}\mathop {\min }\limits_{{\bf{\Theta }},{\bf{a}},{{\bm{\tau }}^{\rm{N}}},{\bf{p}},{\bf{f}}} \;&\tau _{\rm{N}}^{\rm{c}} + \tau _t^{{\rm{loc}}}\\
\label{C3-b}{\rm{s.t.}}\;\;\;\;\;\;&{p_k}\tau _{\rm{N}}^{\rm{c}} + \xi {C_k}{D_k}f_k^2 \le E_k^{\max },~\forall k,\\
\label{C3-c}&{{\bf{r}}^{\rm{N}}} \in {{\cal R}_{\rm{N}}}\left( {{\bf{p}},{\bf{\Theta }}} \right), r_k^{\rm{N}}\tau _{\rm{N}}^{{\rm{c}}} \ge sa_k^t, \forall k,\\
\label{C3-d}&\eqref{C1-d},\eqref{C1-e},\eqref{C1-f},\eqref{C1-g},\eqref{C2-e},
\end{align}
\end{subequations}
respectively, where ${{\bm{\tau }}^{\rm{F}}} = \left[ {\tau _{\rm{F}}^{\rm{c}},\tau _t^{{\rm{loc}}}} \right]$, ${{\bm{\tau }}^{\rm{N}}} = \left[ {\tau _{\rm{N}}^{\rm{c}},\tau _t^{{\rm{loc}}}} \right]$, and ${{\cal R}_{\rm{N}}}\left( {{\bf{p}},{\bf{\Theta }}} \right)$ is defined in \eqref{capacity_region}. Note that constraint \eqref{C2-d} indicates the requirement of bandwidth allocation coefficients.

Since the optimization variables are tightly coupled in the constraints and the device scheduling variables $\left\{ {a_k^t} \right\}$ are binary, problems \eqref{C1}, \eqref{C2}, and \eqref{C3} are all non-convex. Generally, there are no standard methods to solve them optimally. Although the above problems have similar forms, we propose different algorithms to obtain their high-quality solutions by exploiting their unique structures, as will be shown in Section IV.
\vspace{-10pt}
\section{Proposed Solution}
In this section, we investigate the joint communication-computation resource allocation and IRS phase-shifts design in IRS aided FL to minimize the per-round latency under the proposed three transmission protocols.
\vspace{-10pt}
\subsection{Proposed Solution to I-TDMA}
Defining $E_{k,{\rm{T}}}^{\rm{c}} = {p_k}\tau _k^{\rm{c}},\forall k \in {\cal K}$ and replacing ${p_k}$ with $E_{k,{\rm{T}}}^{\rm{c}}/\tau _k^{\rm{c}}$, problem \eqref{C1} can be equivalently rewritten as
\begin{subequations}\label{C4}
\begin{align}
\label{C4-a}\mathop {\min }\limits_{\left\{ {{{\bf{\Theta }}_k}} \right\},{\bf{a}},{{\bm{\tau }}^{\rm{T}}},\left\{ {E_{k,{\rm{T}}}^{\rm{c}}} \right\},{\bf{f}}} &\sum\nolimits_{k = 1}^K {\tau _k^{{\rm{c}}}}  + \tau _t^{{\rm{loc}}}\\
\label{C4-b}{\rm{s.t.}}\;\;\;\;\;\;\;\;\;\;\;&E_{k,{\rm{T}}}^{\rm{c}} + \xi {C_k}{D_k}f_k^2 \le E_k^{\max },~\forall k,\\
\label{C4-c}&B\tau _k^{\rm{c}}{\log _2}\left( {1 + \frac{{E_{k,{\rm{T}}}^{\rm{c}}{{\left| {{h_k}\left( {{{\bf{\Theta }}_k}} \right)} \right|}^2}}}{{\tau _k^{\rm{c}}B{\sigma ^2}}}} \right)\nonumber\\
&\ge sa_k^t, \forall k,\\
\label{C4-d}&\eqref{C1-d},\eqref{C1-e},\eqref{C1-f},\eqref{C1-g},\eqref{C1-h}.
\end{align}
\end{subequations}
For problem \eqref{C4}, we have the following proposition.
\begin{pos}
The optimal $\left\{ {{f_k},\tau _t^{{\rm{loc}}}} \right\}$ of problem \eqref{C4} is denoted by $\left\{ {f_k^*,{{\left( {\tau _t^{{\rm{loc}}}} \right)}^*}} \right\}$. Then, it follows that
\begin{align}\label{computation_frequency}
f_k^* = a_k^t\frac{{{C_k}{D_k}}}{{{{\left( {\tau _t^{{\rm{loc}}}} \right)}^*}}},\forall k \in {\cal K}.
\end{align}
\end{pos}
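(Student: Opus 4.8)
The plan is to establish \eqref{computation_frequency} by a perturbation (exchange) argument that exploits the fact that the CPU frequencies $\{f_k\}$ do not appear in the objective \eqref{C4-a} and enter problem \eqref{C4} only through the energy budget \eqref{C4-b} and the local-computation-latency constraint \eqref{C1-d}. The guiding intuition is that, for a scheduled device, any $f_k$ larger than what \eqref{C1-d} strictly requires merely wastes energy on computation through the term $\xi C_k D_k f_k^2$, energy that could instead be diverted to the uploading stage so as to shrink $\tau_k^{\rm{c}}$ and hence the objective. Accordingly, the optimal $f_k$ should be driven down to the smallest value permitted by \eqref{C1-d}, which is exactly $a_k^t C_k D_k/(\tau_t^{\rm{loc}})^*$.

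First I would dispose of the unscheduled devices. If $a_k^t=0$, then \eqref{C4-c} reads $B\tau_k^{\rm{c}}\log_2(\cdots)\ge 0$, so $\tau_k^{\rm{c}}=0$ is feasible and, since the objective is increasing in each $\tau_k^{\rm{c}}$, optimal; consequently $E_{k,{\rm{T}}}^{\rm{c}}=p_k\tau_k^{\rm{c}}=0$, and $f_k$ is then unconstrained except by \eqref{C4-b}. Choosing $f_k^*=0=a_k^t C_k D_k/(\tau_t^{\rm{loc}})^*$ is therefore a valid optimal selection consistent with \eqref{computation_frequency} (assuming at least one device is scheduled, so that $(\tau_t^{\rm{loc}})^*>0$).

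The main step treats a scheduled device $k$ (with $a_k^t=1$) and argues by contradiction that \eqref{C1-d} must be active. I would first note that \eqref{C4-c} is itself active at the optimum: its left-hand side is strictly increasing in $\tau_k^{\rm{c}}$, so any slack would permit a strict decrease of $\tau_k^{\rm{c}}$, contradicting optimality. Now suppose \eqref{C1-d} were slack, i.e., $f_k^*>C_k D_k/(\tau_t^{\rm{loc}})^*$. Reducing $f_k$ to $\hat f_k=C_k D_k/(\tau_t^{\rm{loc}})^*$ keeps \eqref{C1-d} satisfied (now with equality) and allows $\tau_t^{\rm{loc}}$ to remain at $(\tau_t^{\rm{loc}})^*$, since $C_k D_k/\hat f_k=(\tau_t^{\rm{loc}})^*$ and the other devices are untouched. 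Because $\xi C_k D_k f_k^2$ strictly decreases, constraint \eqref{C4-b} acquires slack, so $E_{k,{\rm{T}}}^{\rm{c}}$ can be strictly increased; as the left-hand side of \eqref{C4-c} is strictly increasing in $E_{k,{\rm{T}}}^{\rm{c}}$ for fixed $\tau_k^{\rm{c}}$, this creates slack in \eqref{C4-c}, which in turn permits a strict reduction of $\tau_k^{\rm{c}}$ while preserving all constraints. This strictly lowers $\sum_k\tau_k^{\rm{c}}+\tau_t^{\rm{loc}}$, contradicting optimality. Hence \eqref{C1-d} is tight and $f_k^*=C_k D_k/(\tau_t^{\rm{loc}})^*$, which is \eqref{computation_frequency}.

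I expect the main obstacle to be the bookkeeping in this perturbation chain, specifically verifying the two monotonicity properties of the rate term $B\tau_k^{\rm{c}}\log_2\bigl(1+E_{k,{\rm{T}}}^{\rm{c}}|h_k({\bf{\Theta}}_k)|^2/(\tau_k^{\rm{c}}B\sigma^2)\bigr)$ in \eqref{C4-c}: strict monotonicity in $E_{k,{\rm{T}}}^{\rm{c}}$ is immediate from the logarithm, while strict monotonicity in $\tau_k^{\rm{c}}$ follows because the map $x\mapsto x\log_2(1+c/x)$ is increasing for $x>0$ and any $c>0$. The remaining check, that the perturbation leaves \eqref{C1-e}, \eqref{C1-f}, \eqref{C1-g}, and \eqref{C1-h} intact, is trivial since none of them involves $f_k$, $E_{k,{\rm{T}}}^{\rm{c}}$, or $\tau_k^{\rm{c}}$.
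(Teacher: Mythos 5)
Your proposal is correct and follows essentially the same route as the paper's own proof: a contradiction/exchange argument in which any $f_k^*$ exceeding $a_k^t C_k D_k/(\tau_t^{\rm{loc}})^*$ is lowered to that value, the saved computation energy is reallocated to $E_{k,{\rm{T}}}^{\rm{c}}$, and the monotonicity of the rate constraint then yields a strictly smaller $\tau_k^{\rm{c}}$, contradicting optimality. Your treatment is somewhat more careful than the paper's (you handle the unscheduled devices and verify the monotonicity of $x\mapsto x\log_2(1+c/x)$ explicitly), but the underlying idea is identical.
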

\begin{proof}
We show \eqref{computation_frequency} by contradiction. Suppose that $\left\{ {f_k^*} \right\}$ achieves the optimal solution of problem \eqref{C4} and there exists ${f_k^*}$ satisfying $f_k^* > a_k^t{C_k}{D_k}/{\left( {\tau _t^{{\rm{loc}}}} \right)^*}$, $\exists k \in {\cal K}$. Then, we construct a different solution $\left\{ {{{\tilde f}_k}} \right\}$ which satisfies ${{\tilde f}_k} = a_k^t{C_k}{D_k}/{\left( {\tau _t^{{\rm{loc}}}} \right)^*}$, $\forall k \in {\cal K}$. For device $k$ whose $f_k^* > a_k^t{C_k}{D_k}/{\left( {\tau _t^{{\rm{loc}}}} \right)^*}$, it is obvious that $\xi {C_k}{D_k}\tilde f_k^2 < \xi {C_k}{D_k}{\left( {f_k^*} \right)^2}$, which leads to $\tilde E_{k,{\rm{T}}}^{\rm{c}} > E_{k,{\rm{T}}}^{{\rm{c*}}}$. Note that $\left\{ {\tilde E_{k,{\rm{T}}}^{\rm{c}}} \right\}$ and $\left\{ {E_{k,{\rm{T}}}^{{\rm{c*}}}} \right\}$ are two sets of values of $E_{k,{\rm{T}}}^{\rm{c}}$ under the solutions of $\left\{ {{{\tilde f}_k}} \right\}$ and $\left\{ {f_k^*} \right\}$, respectively. From \eqref{C4-c}, it is noticed that $\tau _k^{\rm{c}}$ monotonically decreases with respect to $E_{k,{\rm{T}}}^{\rm{c}}$. Obviously, the objective value under $\left\{ {{{\tilde f}_k}} \right\}$ is smaller than that under $\left\{ {{{\tilde f}_k}} \right\}$, which contradicts the assumption that $\left\{ {f_k^*} \right\}$ is optimal. Hence, the optimal $\left\{ {{f_k}} \right\}$ satisfies \eqref{computation_frequency}.
\end{proof}

Proposition 1 reveals that the computational frequency at edge devices should be set in a way that all edge devices complete the local training concurrently. Based on Proposition 1, problem \eqref{C4} can be equivalently transformed to
\begin{subequations}\label{C5}
\begin{align}
\label{C5-a}\mathop {\min }\limits_{\left\{ {{{\bf{\Theta }}_k}} \right\},{\bf{a}},{{\bm{\tau }}^{\rm{T}}},\left\{ {E_{k,{\rm{T}}}^{\rm{c}}} \right\}} &\sum\nolimits_{k = 1}^K {\tau _k^{{\rm{c}}}}  + \tau _t^{{\rm{loc}}}\\
\label{C5-b}{\rm{s.t.}}\;\;\;\;\;\;\;\;\;&E_{k,{\rm{T}}}^{\rm{c}} + \xi a_k^t\frac{{{{\left( {{C_k}{D_k}} \right)}^3}}}{{{{\left( {\tau _t^{{\rm{loc}}}} \right)}^2}}} \le E_k^{\max },~\forall k,\\
\label{C5-c}&\eqref{C4-c},\eqref{C1-e},\eqref{C1-g},\eqref{C1-h}.
\end{align}
\end{subequations}
Although constraint \eqref{C4-c} of problem \eqref{C5} is non-convex due to the tightly coupled optimization variables $\left\{ {{{\bf{\Theta }}_k}} \right\}$, $\left\{ {\tau _k^{\rm{c}}} \right\}$, $\left\{ {E_{k,T}^{\rm{c}}} \right\}$, and $\left\{ {a_k^t} \right\}$, we observe that ${{{\bf{\Theta }}_k}}$'s are separate in each edge device's achievable rate. This observation suggests that ${{{\bf{\Theta }}_k}}$'s can be optimized by solving $K$ sub-problems in parallel. In particular, the optimal solution of ${{{\bf{\Theta }}_k}}$ is obtained by solving the optimization problem as follows
\begin{subequations}\label{C6}
\begin{align}
\label{C6-a}\mathop {\max }\limits_{{{{\bf{\Theta }}_k}}} \;\;&{\left| {{h_{d,k}} + {{\bf{g}}^H}{{\bf{\Theta }}_k}{{\bf{h}}_{r,k}}} \right|^2}\\
\label{C6-b}{\rm{s.t.}}\;\;&\left| {{{\left[ {{{\bf{\Theta }}_k}} \right]}_{n,n}}} \right| = 1,~\forall n.
\end{align}
\end{subequations}
As shown in \cite{wu2021intelligent}, the optimal solution of problem \eqref{C6}, denoted by ${\bf{\Theta }}_k^*$, can be derived as
\begin{align}\label{optimal_phaseshift}
{\left[ {{\bf{\Theta }}_k^*} \right]_{n,n}} = {e^{ - j\left( {\arg \left( {{h_{d,k}}} \right) + \arg \left( {{{\left[ {{\rm{diag}}\left( {{{\bf{g}}^H}} \right){{\bf{h}}_{r,k}}} \right]}_n}} \right)} \right)}},\forall n.
\end{align}
Define $\gamma _k^* = {\left| {{h_{d,k}} + {{\bf{g}}^H}{\bf{\Theta }}_k^*{{\bf{h}}_{r,k}}} \right|^2}$ and thereby problem \eqref{C5} can be rewritten as
\begin{subequations}\label{C7}
\begin{align}
\label{C7-a}\mathop {\min }\limits_{{\bf{a}},{{\bm{\tau }}^{\rm{T}}},\left\{ {E_{k,{\rm{T}}}^{\rm{c}}} \right\}} &\sum\nolimits_{k = 1}^K {\tau _k^{{\rm{c}}}}  + \tau _t^{{\rm{loc}}}\\
\label{C7-b}{\rm{s.t.}}\;\;\;\;\;&B\tau _k^{\rm{c}}{\log _2}\left( {1 + \frac{{E_{k,{\rm{T}}}^{\rm{c}}\gamma _k^*}}{{\tau _k^{\rm{c}}B{\sigma ^2}}}} \right) \ge sa_k^t,~\forall k,\\
\label{C7-c}&\eqref{C5-b},\eqref{C1-e},\eqref{C1-g}.
\end{align}
\end{subequations}
Problem \eqref{C7} is still challenging to be solved optimally due to a set of binary variables $\left\{ {a_k^t} \right\}$. To alleviate this burden, we propose an efficient method to obtain its optimal solution by exploiting its unique structure. Note that for the optimal solution of problem \eqref{C7}, constraint \eqref{C5-b} is met with equality, since otherwise we can always decrease the objective value by increasing $E_{k,{\rm{T}}}^{\rm{c}}$.

We first consider the case of the fixed value of ${\tau _t^{{\rm{loc}}}}$. For the given feasible value of ${\tau _t^{{\rm{loc}}}}$, denoted by $\bar \tau _t^{{\rm{loc}}}$, edge devices use up the whole available energy, which leads to
\begin{align}\label{communication_energy}
\bar E_{k,{\rm{T}}}^{\rm{c}} = a_k^t\left( {E_k^{\max } - \xi \frac{{{{\left( {{C_k}{D_k}} \right)}^3}}}{{{{\left( {\bar \tau _t^{{\rm{loc}}}} \right)}^2}}}} \right),\forall k \in {\cal K}.
\end{align}
In this case, problem \eqref{C7} can be rewritten in the following by dropping the constant term.
\begin{subequations}\label{C8}
\begin{align}
\label{C8-a}\mathop {\min }\limits_{{\bf{a}},{{\bm{\tau }}^{\rm{T}}}} \;&\sum\nolimits_{k = 1}^K {\tau _k^{{\rm{c}}}}\\
\label{C8-b}{\rm{s.t.}}\;&B\tau _k^{\rm{c}}{\log _2}\left( {1 + \frac{{\bar E_{k,{\rm{T}}}^{\rm{c}}\gamma _k^*}}{{\tau _k^{\rm{c}}B{\sigma ^2}}}} \right) \ge sa_k^t,~\forall k,\\
\label{C8-c}&\eqref{C1-e},\eqref{C1-g}.
\end{align}
\end{subequations}
To obtain the optimal $\left\{ {\tau _k^c} \right\}$ of problem \eqref{C8}, we provide the following proposition.
\begin{pos}
For problem \eqref{C8}, the optimal transmission time allocation, denoted by $\left\{ {{{\left( {\tau _k^c} \right)}^*}} \right\}$, is given by
\begin{align}\label{optimal_time_TDMA}
{\left( {\tau _k^{\rm{c}}} \right)^*} = a_k^t\frac{{ - s\bar E_{k,{\rm{T}}}^{\rm{c}}\gamma _k^*\ln 2}}{{B\left( {s{\sigma ^2}\ln 2 + {{\cal W}_{ - 1}}\left( {{\Upsilon _k}} \right)\bar E_{k,{\rm{T}}}^{\rm{c}}\gamma _k^*} \right)}},\forall k,
\end{align}
where
\begin{align}\label{term1}
{\Upsilon _k} =  - \frac{{s{\sigma ^2}\ln 2}}{{\bar E_{k,{\rm{T}}}^{\rm{c}}\gamma _k^*}}{e^{ - \frac{{s{\sigma ^2}}}{{\bar E_{k,{\rm{T}}}^{\rm{c}}\gamma _k^*}}\ln 2}},\forall k,
\end{align}
and ${{\cal W}_{ - 1}}\left( x \right)$ is the secondary branch of the Lambert W function.
\end{pos}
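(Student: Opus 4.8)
The plan is to exploit the separable structure of problem \eqref{C8}. First I would observe that the accuracy constraint \eqref{C1-e} and the binary constraint \eqref{C1-g} involve only the scheduling vector $\mathbf{a}$, not the transmission times $\{\tau_k^{\rm c}\}$. Hence, for any fixed feasible scheduling $\{a_k^t\}$, the objective $\sum_{k}\tau_k^{\rm c}$ and the constraints \eqref{C8-b} fully decouple across $k$, so the minimization reduces to solving $K$ independent single-variable problems: minimize $\tau_k^{\rm c}$ subject to \eqref{C8-b}. For an unscheduled device ($a_k^t=0$) the right-hand side of \eqref{C8-b} is zero, and since $\tau_k^{\rm c}\ge 0$ the optimum is trivially $\tau_k^{\rm c}=0$, which is exactly captured by the factor $a_k^t$ in \eqref{optimal_time_TDMA}. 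It therefore remains to treat the scheduled case $a_k^t=1$.

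Next I would show that the constraint \eqref{C8-b} is binding at optimality. Define $g(\tau)=B\tau\log_2\!\left(1+\frac{\bar E_{k,{\rm T}}^{\rm c}\gamma_k^*}{\tau B\sigma^2}\right)$, the number of bits device $k$ can deliver in time $\tau$ under a fixed energy budget. Differentiating gives $g'(\tau)=\frac{B}{\ln 2}\left[\ln(1+x)-\frac{x}{1+x}\right]$ with $x=\frac{\bar E_{k,{\rm T}}^{\rm c}\gamma_k^*}{\tau B\sigma^2}>0$, and the elementary inequality $\ln(1+x)>\frac{x}{1+x}$ for all $x>0$ yields $g'(\tau)>0$. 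Thus $g$ is strictly increasing, and the feasible set $\{\tau:g(\tau)\ge s\}$ is a half-line $[\tau_{\min},\infty)$; minimizing $\tau_k^{\rm c}$ therefore forces $g(\tau_k^{\rm c})=s$, i.e.\ \eqref{C8-b} holds with equality, confirming the remark preceding the proposition. Taking $\tau\to\infty$ also shows $g(\tau)\uparrow \frac{\bar E_{k,{\rm T}}^{\rm c}\gamma_k^*}{\sigma^2\ln 2}$, which pins down the feasibility condition $s<\frac{\bar E_{k,{\rm T}}^{\rm c}\gamma_k^*}{\sigma^2\ln 2}$.

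Finally I would solve the transcendental equality $g(\tau_k^{\rm c})=s$ in closed form. Converting to natural logarithm and substituting the SNR variable $x=\frac{\bar E_{k,{\rm T}}^{\rm c}\gamma_k^*}{\tau_k^{\rm c} B\sigma^2}$ (equivalently $z=1+x$), the equation rearranges to the canonical form $w\,e^{w}=\Upsilon_k$ with $w=-\frac{s\sigma^2\ln 2}{\bar E_{k,{\rm T}}^{\rm c}\gamma_k^*}\,z$ and $\Upsilon_k$ as defined in \eqref{term1}, which is inverted using the Lambert W function. The main obstacle, and the step requiring the most care, is the branch selection: the argument $\Upsilon_k$ lies in $[-1/e,0)$, so $w\,e^{w}=\Upsilon_k$ admits two real roots. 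I would argue that the root $w=-\frac{s\sigma^2\ln 2}{\bar E_{k,{\rm T}}^{\rm c}\gamma_k^*}$ is the degenerate solution $z=1$ (zero SNR, $\tau\to\infty$) lying on the principal branch $\mathcal{W}_0$, while the feasibility bound $\frac{s\sigma^2\ln 2}{\bar E_{k,{\rm T}}^{\rm c}\gamma_k^*}<1$ guarantees that the physically meaningful root, corresponding to a finite positive $\tau_k^{\rm c}$ with $z>1$, lies strictly below $-1$ and hence on the secondary branch $\mathcal{W}_{-1}$. Selecting $w=\mathcal{W}_{-1}(\Upsilon_k)$ and back-substituting $\tau_k^{\rm c}=\frac{\bar E_{k,{\rm T}}^{\rm c}\gamma_k^*}{x B\sigma^2}$ then reproduces \eqref{optimal_time_TDMA} after simplification, completing the proof.
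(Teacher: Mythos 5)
Your proposal is correct and follows essentially the same route as the paper's proof: establish that the delivered-bits function is strictly increasing in $\tau_k^{\rm c}$, conclude that \eqref{C8-b} binds at optimality, and invert the resulting transcendental equation via the Lambert W function. You additionally supply details the paper leaves implicit — the explicit derivative argument for monotonicity, the feasibility condition $\bar E_{k,{\rm T}}^{\rm c}\gamma_k^* > s\sigma^2\ln 2$ (which the paper only states later in Lemma 2), and the justification for selecting the $\mathcal{W}_{-1}$ branch over the degenerate $\mathcal{W}_0$ root — all of which are accurate and strengthen the argument.
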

\begin{proof}
It is easy to show that
\begin{align}\label{function1}
h_k^{\rm{T}}\left( x \right) = Bx{\log _2}\left( {1 + \frac{{\bar E_{k,{\rm{T}}}^{\rm{c}}\gamma _k^*}}{{xB{\sigma ^2}}}} \right),\forall k,
\end{align}
is an increasing function with respect to $x$. Following that, constraint \eqref{C8-b} is met with equality at the optimal solution of problem \eqref{C8}. Hence, ${\left( {\tau _k^{\rm{c}}} \right)^*}$ is the unique solution of equation $h_k^{\rm{T}}\left( x \right) = s$. After some simple manipulations, $h_k^{\rm{T}}\left( x \right) = s$ is equivalent to
\begin{align}\label{temp}
\left( {\frac{{s{\sigma ^2}}}{{\bar E_{k,{\rm{T}}}^{\rm{c}}\gamma _k^*}} \!+\! \frac{{s\ln 2}}{{xB}}} \right){e^{ - \left( {\frac{{s{\sigma ^2}}}{{\bar E_{k,{\rm{T}}}^{\rm{c}}\gamma _k^*}} \!+\! \frac{{s\ln 2}}{{xB}}} \right)}} \!\!=\!\! \frac{{s{\sigma ^2}\ln 2}}{{\bar E_{k,{\rm{T}}}^{\rm{c}}\gamma _k^*}}{2^{ - \frac{{s{\sigma ^2}}}{{\bar E_{k,{\rm{T}}}^{\rm{c}}\gamma _k^*}}}},
\end{align}
which directly leads to the result in \eqref{optimal_time_TDMA}.
\end{proof}

For constraint \eqref{C1-e} in problem \eqref{C8}, it can be rewritten as
\begin{align}\label{weight_deivice}
\sum\nolimits_{k = 1}^K {\left( {1 - a_k^t} \right){D_k} \le } \rho ,
\end{align}
where $\rho  = \sqrt {\frac{{\kappa L{D^2}}}{{2\varepsilon }}}$ is a constant related to the target of the gradient error $\varepsilon$. By exploiting Proposition 2 and further relaxing the binary constraints in \eqref{C1-g}, problem \eqref{C8} is reduced to
\begin{subequations}\label{C9}
\begin{align}
\label{C9-a}\mathop {\min }\limits_{{\bf{a}}} \;&\sum\nolimits_{k = 1}^K {a_k^t\tilde \tau _k^{\rm{c}}} \\
\label{C9-b}{\rm{s.t.}}\;&\sum\nolimits_{k = 1}^K {\left( {1 - a_k^t} \right){D_k} \le } \rho\\
\label{C9-c}&0 \le a_k^t \le 1,~\forall k,
\end{align}
\end{subequations}
where $\tilde \tau _k^{\rm{c}} = {\left( {\tau _k^{\rm{c}}} \right)^*}/a_k^t$ and ${\left( {\tau _k^{\rm{c}}} \right)^*}$ is given in \eqref{optimal_time_TDMA}. Problem \eqref{C9} is convex since the objective function and all constraints are linear, which can be solved optimally by employing standard convex optimization techniques, such as CVX. Instead of using generic methods, we derive the closed-form expressions of the optimal $\left\{ {a_k^t} \right\}$ in the following proposition to gain useful insights.
\begin{pos}
The optimal $\left\{ {a_k^t} \right\}$ of problem \eqref{C9}, denoted by $\left\{ {{{\left( {a_k^t} \right)}^*}} \right\}$, can be obtained as
\begin{align}\label{optimal_schduling}
{{{\left( {a_k^t} \right)}^*}} = \begin{cases} 1, &I\left( {\tilde \tau _k^{\rm{c}},{D_k}} \right) \buildrel \Delta \over = \tilde \tau _k^{\rm{c}} - {\lambda ^*}{D_k} \le 0 \cr 0,
&{\rm{otherwise}}, \end{cases}
\end{align}
where ${\lambda ^*}$ is the optimal dual variable and can be obtained by sub-gradient method.
\end{pos}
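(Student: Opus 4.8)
The plan is to solve the relaxed linear program \eqref{C9} via Lagrangian duality, dualizing only the accuracy constraint \eqref{C9-b} with a multiplier $\lambda \ge 0$ while keeping the box constraints \eqref{C9-c} explicit. This produces the partial Lagrangian
\begin{align}
L\left(\mathbf{a},\lambda\right) = \sum\nolimits_{k=1}^K a_k^t\left(\tilde\tau_k^{\rm c} - \lambda D_k\right) + \lambda\left(\sum\nolimits_{k=1}^K D_k - \rho\right),
\end{align}
in which the coefficient of each $a_k^t$ is exactly $I(\tilde\tau_k^{\rm c},D_k) = \tilde\tau_k^{\rm c} - \lambda D_k$. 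Since \eqref{C9} is a feasible linear program (scheduling all devices trivially satisfies \eqref{C9-b}), Slater's condition holds, strong duality applies, and the KKT conditions are necessary and sufficient for optimality.

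I would then minimize $L(\mathbf{a},\lambda)$ over the box for a fixed $\lambda$. Because $L$ is separable and affine in each $a_k^t$, the minimizer depends only on the sign of $I(\tilde\tau_k^{\rm c},D_k)$: when $I(\tilde\tau_k^{\rm c},D_k)\le 0$ the objective is non-increasing in $a_k^t$ and the minimizer is $a_k^t=1$; when $I(\tilde\tau_k^{\rm c},D_k)>0$ it is non-decreasing and the minimizer is $a_k^t=0$. This directly yields the threshold rule \eqref{optimal_schduling}, where the tie $I(\tilde\tau_k^{\rm c},D_k)=0$ is assigned to $a_k^t=1$. The rule admits a clean interpretation: device $k$ is scheduled precisely when its per-sample uploading cost $\tilde\tau_k^{\rm c}/D_k$ lies below the water level $\lambda^*$, so the most communication-efficient devices are retained.

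It remains to determine $\lambda^*$. I would observe that the dual function $g(\lambda)=\min_{0\le a_k^t\le 1}L(\mathbf{a},\lambda)$ is concave and piecewise linear, with subgradient $\sum_{k=1}^K\left(1-a_k^t(\lambda)\right)D_k-\rho$ evaluated at the threshold solution $\mathbf{a}(\lambda)$. Maximizing $g(\lambda)$ over $\lambda\ge 0$ by a one-dimensional subgradient ascent therefore drives the constraint residual toward zero, and complementary slackness pins $\lambda^*$ down as either the value at which \eqref{C9-b} becomes active or $\lambda^*=0$ when the accuracy target is already met by the unconstrained minimizer. Substituting $\lambda^*$ back into the threshold rule recovers $\{(a_k^t)^*\}$ as claimed.

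The main obstacle I anticipate is not the threshold derivation, which is immediate from separability, but reconciling the integral rule \eqref{optimal_schduling} with the continuous LP \eqref{C9}. As \eqref{C9} carries a single (knapsack-type) linear constraint together with box constraints, at most one device can sit exactly at the threshold $I(\tilde\tau_k^{\rm c},D_k)=0$ and hence be fractional at the LP optimum; generically no device hits the threshold and \eqref{optimal_schduling} is the unique optimum. In the degenerate case I would argue that rounding the single threshold device up to $a_k^t=1$ preserves feasibility of \eqref{C9-b} (increasing $a_k^t$ only relaxes the constraint) and leaves the Lagrangian value unchanged, so an integral optimum of the stated form always exists and simultaneously solves the original binary problem \eqref{C8}.
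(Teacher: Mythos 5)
Your proposal is correct and follows essentially the same route as the paper: dualize the accuracy constraint \eqref{C9-b}, exploit separability of the partial Lagrangian to get the per-device threshold rule on $\tilde \tau _k^{\rm{c}} - \lambda {D_k}$, and recover ${\lambda ^*}$ by subgradient ascent on the dual. Your additional remarks on strong duality and on why at most one device can be fractional at the LP optimum (so the integral threshold rule remains optimal) actually make the argument slightly more complete than the paper's own proof, which asserts the binary optimality without addressing the tie case.
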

\begin{proof}
Since problem \eqref{C9} is convex, it can be optimally solved by using the Lagrange duality method. To this end, the partial Lagrange function of problem \eqref{C9} can be written as
\begin{align}\label{Lagrange_function}
{\cal L}\left( {\left\{ {a_k^t} \right\},\lambda } \right) = \sum\limits_{k = 1}^K {a_k^t\tilde \tau _k^{\rm{c}}} \! +\! \lambda \left( {\sum\limits_{k = 1}^K {\left( {1 \!- \! a_k^t} \right){D_k} \!-\! \rho } } \right),
\end{align}
where $\lambda$ is the non-negative dual variable associated with constraint \eqref{C9-b}. Accordingly, the dual function is
\begin{align}\label{dual_function}
{\cal G}\left( \lambda  \right) = \mathop {\min }\limits_{\left\{ {a_k^t} \right\}} {\cal L}\left( {\left\{ {a_k^t} \right\},\lambda } \right),\;{\rm{s}}{\rm{.t}}{\rm{.}}\;\eqref{C9-c}.
\end{align}
Hence, the dual problem of problem \eqref{C9} is
\begin{align}\label{dual_problem}
\mathop {\max }\limits_{\lambda } \; {\cal G}\left( {\lambda } \right),\;{\rm{s}}{\rm{.t}}{\rm{.}}\;\lambda  \ge {\rm{0}}.
\end{align}
Note that problem \eqref{C9} can be solved by solving its dual problem \eqref{dual_problem} equivalently. In the following, we aim to obtain ${\cal G}\left( \lambda  \right)$ by solving problem \eqref{dual_function} optimally under the given $\lambda$. By dropping some constant terms, problem \eqref{dual_function} can be equivalently transformed to
\begin{align}\label{dual_function_problem}
\sum\nolimits_{k = 1}^K {a_k^t\left( {\tilde \tau _k^{\rm{c}} - \lambda {D_k}} \right)},\;{\rm{s}}{\rm{.t}}{\rm{.}}\;\eqref{C9-c}.
\end{align}
It is obvious that the ${a_k^t}$ should be set to 1 provided that $\tilde \tau _k^{\rm{c}} - \lambda {D_k} \le 0$ and should be set to 0 otherwise. Under the optimal dual variable ${\lambda ^*}$, the optimal scheduling variables are given in \eqref{optimal_schduling}, which thus completes the proof.
\end{proof}

Proposition 3 implies that the result in \eqref{optimal_schduling} admits a binary solution for device scheduling, which guarantees both the feasibility and optimality of the original problem \eqref{C8}. Note that $I\left( {\tilde \tau _k^{\rm{c}},{D_k}} \right)$ defined in \eqref{optimal_schduling} serves as a scheduling indicator for each edge device. It is evident that $I\left( {\tilde \tau _k^{\rm{c}},{D_k}} \right)$ decreases with respect to ${\left| {{h_k}\left( {{\bf{\Theta }}_k^*} \right)} \right|^2}$, ${{D_k}}$, and $\bar E_{k,T}^{\rm{c}}$, which makes sense to schedule a device with a higher channel gain, a large data-set, and sufficient communication energy.

Based on the above discussion, problem \eqref{C4} is solved optimally under the fixed value of $\tau _t^{{\rm{loc}}}$. Then, we demonstrate that the global optimal solution of problem \eqref{C4} can be obtained by performing a 1-D search over all candidate values of $\tau _t^{{\rm{loc}}}$. To this end, we derive both the lower bound and the upper bound of the optimal ${\left( {\tau _t^{{\rm{loc}}}} \right)^*}$ in the following lemma.
\begin{lem}
${\left( {\tau _t^{{\rm{loc}}}} \right)^*}$ satisfies $\tau _{{\rm{low}}}^{{\rm{loc}}} \le {\left( {\tau _t^{{\rm{loc}}}} \right)^*} \le \tau _{{\rm{up}}}^{{\rm{loc}}}$, where
\begin{align}\label{computation_time_lower}
\tau _{{\rm{low}}}^{{\rm{loc}}} = \mathop {\max }\limits_{k \in {\cal K}} \left\{ {{{\left( {\frac{{\xi C_k^3D_k^3}}{{E_k^{\max } - s{\sigma ^2}\ln 2/\gamma _k^*}}} \right)}^{1/2}}} \right\},
\end{align}
\begin{align}\label{computation_time_lower}
\tau _{{\rm{up}}}^{{\rm{loc}}} = \tau _{{\rm{low}}}^{{\rm{loc}}} + {\tau ^{c*}}\left( {\tau _{{\rm{low}}}^{{\rm{loc}}}} \right)
\end{align}
with ${\tau ^{c*}}\left( {\tau _{{\rm{low}}}^{{\rm{loc}}}} \right)$ denoting the optimal value of problem \eqref{C8} under $\tau _t^{{\rm{loc}}} = \tau _{{\rm{low}}}^{{\rm{loc}}}$.
\begin{proof}
Note that the condition
\begin{align}\label{energy_condition}
\bar E_{k,{\rm{T}}}^{\rm{c}} \ge \left( {s{\sigma ^2}\ln 2} \right)/\gamma _k^*,\forall k
\end{align}
should be satisfied to guarantee that equation $h_k^{\rm{T}}\left( x \right) = s$ has the unique solution, which leads to
\begin{align}\label{energy_condition1}
{\left( {\tau _t^{{\rm{loc}}}} \right)^*} \ge {\left( {\frac{{\xi C_k^3D_k^3}}{{E_k^{\max } - s{\sigma ^2}\ln 2/\gamma _k^*}}} \right)^{1/2}},\forall k.
\end{align}
Hence, we obtain its lower bound in \eqref{computation_time_lower}. Since $\tau _{{\rm{low}}}^{{\rm{loc}}}$ is one feasible solution of ${\tau _t^{{\rm{loc}}}}$ to problem \eqref{C4}, we have
\begin{align}\label{inequality}
{\left( {\tau _t^{{\rm{loc}}}} \right)^*} \!<\! {\left( {\tau _t^{{\rm{loc}}}} \right)^*} \!+\! {\tau ^{c*}}\left( {{{\left( {\tau _t^{{\rm{loc}}}} \right)}^*}} \right) \!\le \! \tau _{{\rm{low}}}^{{\rm{loc}}} \!+\! {\tau ^{c*}}\left( {\tau _{{\rm{low}}}^{{\rm{loc}}}} \right),
\end{align}
which thus completes the proof.
\end{proof}
\end{lem}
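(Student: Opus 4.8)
The plan is to establish the lower and upper bounds separately, each through a short monotonicity-and-feasibility argument that reuses the structure uncovered in Proposition 2. For the lower bound I would begin from the schedulability requirement of each device: by Proposition 2 the function $h_k^{\rm T}(x)$ in \eqref{function1} is strictly increasing in $x$, and the optimal transmission time of a scheduled device is the unique root of $h_k^{\rm T}(x) = s$. The decisive observation is that $h_k^{\rm T}(x)$ stays bounded as $x \to \infty$: with the communication energy $\bar E_{k,{\rm T}}^{\rm c}$ held fixed, stretching the transmission time drives the transmit power to zero, and a first-order expansion of the logarithm gives
\begin{align}
\lim_{x \to \infty} h_k^{\rm T}(x) = \frac{\bar E_{k,{\rm T}}^{\rm c}\gamma_k^*}{\sigma^2 \ln 2}.
\end{align}
Hence $h_k^{\rm T}(x) = s$ is solvable only when this supremum is at least $s$, i.e.\ $\bar E_{k,{\rm T}}^{\rm c}\gamma_k^*/(\sigma^2\ln 2) \ge s$. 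Substituting the closed-form energy \eqref{communication_energy} for a scheduled device ($a_k^t = 1$) and solving the resulting inequality for $\bar\tau_t^{\rm loc}$ yields $\bar\tau_t^{\rm loc} \ge (\xi C_k^3 D_k^3/(E_k^{\max} - s\sigma^2\ln 2/\gamma_k^*))^{1/2}$; enforcing this for every device and maximizing over $k$ then delivers the claimed lower bound $\tau_{\rm low}^{\rm loc}$.

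For the upper bound I would argue by feasibility. The value $\tau_{\rm low}^{\rm loc}$ was constructed precisely so that the schedulability condition holds for all $k$, hence it is itself a feasible choice of $\tau_t^{\rm loc}$. Fixing $\tau_t^{\rm loc} = \tau_{\rm low}^{\rm loc}$ and solving the inner problem \eqref{C8} produces a feasible total latency $\tau_{\rm low}^{\rm loc} + \tau^{c*}(\tau_{\rm low}^{\rm loc}) = \tau_{\rm up}^{\rm loc}$. Optimality of $(\tau_t^{\rm loc})^*$ forces the optimal total latency to be no larger than this feasible value, while positivity of the optimal communication time $\tau^{c*}((\tau_t^{\rm loc})^*)$ supplies the strict left inequality; chaining these facts gives $(\tau_t^{\rm loc})^* < (\tau_t^{\rm loc})^* + \tau^{c*}((\tau_t^{\rm loc})^*) \le \tau_{\rm up}^{\rm loc}$, which is exactly the desired upper bound.

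The hard part will be the limit computation in the lower-bound step, namely recognizing that a fixed energy budget caps the total number of transmittable bits at $\bar E_{k,{\rm T}}^{\rm c}\gamma_k^*/(\sigma^2\ln 2)$ regardless of how long device $k$ transmits. This saturation is what converts the existence of a root of $h_k^{\rm T}(x)=s$ into the energy inequality, and thence into a floor on $\tau_t^{\rm loc}$. Once that interpretation is secured, both bounds reduce to elementary algebra together with the standard principle that an optimal value never exceeds a feasible one, so I anticipate no further difficulty.
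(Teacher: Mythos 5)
Your proposal is correct and follows essentially the same route as the paper: the lower bound comes from requiring that the fixed-energy rate equation $h_k^{\rm T}(x)=s$ be solvable, which translates into $\bar E_{k,{\rm T}}^{\rm c}\ge s\sigma^2\ln 2/\gamma_k^*$ and hence a floor on $\tau_t^{\rm loc}$, while the upper bound follows from the feasibility of $\tau_{\rm low}^{\rm loc}$ together with optimality. The only difference is that you explicitly compute the saturation limit $\lim_{x\to\infty}h_k^{\rm T}(x)=\bar E_{k,{\rm T}}^{\rm c}\gamma_k^*/(\sigma^2\ln 2)$, a detail the paper asserts without derivation.
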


Lemma 2 implies that the optimal ${\left( {\tau _t^{{\rm{loc}}}} \right)^*}$ can be obtained by performing a 1-D exhaustive search over the region $\left[ {\tau _{{\rm{low}}}^{{\rm{loc}}},\tau _{{\rm{up}}}^{{\rm{loc}}}} \right]$. Note that problem \eqref{C7} can be solved optimally for $\forall \tau _t^{{\rm{loc}}} \in \left[ {\tau _{{\rm{low}}}^{{\rm{loc}}},\tau _{{\rm{up}}}^{{\rm{loc}}}} \right]$. For the fixed value of ${\tau _t^{{\rm{loc}}}}$, the computational complexity to obtain $\left\{ {{\bf{\Theta }}_k^*,{{\left( {E_{k,T}^{\rm{c}}} \right)}^*},{{\left( {a_k^t} \right)}^*},{{\left( {\tau _k^{\rm{c}}} \right)}^*}} \right\}$ is ${\cal{O}}\left( {N + K} \right)$. Therefore, the overall complexity for optimally solving the original problem \eqref{C1} is ${\cal{O}}\left( {\left( {N + K} \right)\left( {\tau _{{\rm{up}}}^{{\rm{loc}}} - \tau _{{\rm{low}}}^{{\rm{loc}}}} \right)/\varsigma } \right)$, where $\varsigma $ is a positive constant that controls the accuracy of a 1-D search.

\vspace{-10pt}
\subsection{Proposed Solution to I-FDMA}
It is evident that Proposition 1 is also applicable to I-FDMA. By exploiting Proposition 1, problem \eqref{C2} can be equivalently written as
\begin{subequations}\label{C10}
\begin{align}
\label{C10-a}\mathop {\min }\limits_{{\bf{\Theta }},{\bf{a}},{\bf{b}},{{\bm{\tau }}^{\rm{F}}},{\bf{p}}} \;&\tau _{\rm{F}}^{\rm{c}} + \tau _t^{{\rm{loc}}}\\
\label{C10-b}{\rm{s.t.}}\;\;\;\;\;\;&{p_k}\tau _{\rm{F}}^{\rm{c}} + \xi \frac{{{{\left( {{C_k}{D_k}} \right)}^3}}}{{{{\left( {\tau _t^{{\rm{loc}}}} \right)}^2}}} \le E_k^{\max },~\forall k,\\
\label{C10-c}&\eqref{C1-e},\eqref{C1-g}, \eqref{C2-c}, \eqref{C2-d}, \eqref{C2-e}.
\end{align}
\end{subequations}
For the I-FDMA case, the corresponding problem \eqref{C10} is more challenging to be solved than problem \eqref{C1} since a common IRS phase-shift matrix couples in rate constraints of all edge devices in \eqref{C2-c}. To solve problem \eqref{C10}, a decoupled optimization method is proposed, where we first jointly optimize IRS phase-shift and resource allocation under the given set of scheduling devices ${\cal K}_a^t$. On top of that, we employ a coordinate descent approach that optimizes the device scheduling variables $\left\{ {a_k^t} \right\}$.

\subsubsection{Optimization of $\left\{ {{\bf{\Theta }},{\bf{b}},{{\bf{\tau }}^{\rm{F}}},{\bf{p}}} \right\}$ under Given ${\cal K}_a^t$}
Define ${\bf{v}} = {\left[ {{e^{j{\theta _1}}}, \ldots ,{e^{j{\theta _N}}}} \right]^T}$, ${\bf{f}}_k^H = {{\bf{g}}^H}{\rm{diag}}\left( {{{\bf{h}}_{r,k}}} \right)$, $E_{{\rm{F,}}k}^{\rm{c}} = {p_k}\tau _{\rm{F}}^{\rm{c}}$, and ${e_k} = {b_k}\tau _{\rm{F}}^{\rm{c}}$. Under the given ${\cal K}_a^t$, problem \eqref{C10} can be equivalently transformed into
\begin{subequations}\label{C11}
\begin{align}
\label{C11-a}\mathop {\min }\limits_{{\bf{v}},\left\{ {{e_k}} \right\},{{\bm{\tau }}^{\rm{F}}},\left\{ {E_{{\rm{F,}}k}^{\rm{c}}} \right\}} \;&\tau _{\rm{F}}^{\rm{c}} + \tau _t^{{\rm{loc}}}\\
\label{C11-b}{\rm{s.t.}}\;\;\;\;\;\;\;\;\;&E_{{\rm{F,}}k}^{\rm{c}} \!+\! \xi \frac{{{{\left( {{C_k}{D_k}} \right)}^3}}}{{{{\left( {\tau _t^{{\rm{loc}}}} \right)}^2}}} \!\le\! E_k^{\max },\forall k \in {\cal K}_a^t,\\
\label{C11-c}&B{e_k}{\log _2}\left( {1 + \frac{{E_{{\rm{F,}}k}^{\rm{c}}{{\left| {{h_{d,k}} + {\bf{f}}_k^H{\bf{v}}} \right|}^2}}}{{{e_k}B{\sigma ^2}}}} \right)\nonumber\\
&\ge s,\forall k \in {\cal K}_a^t,\\
\label{C11-d}&\sum\nolimits_{k \in {\cal K}_a^t} {{e_k}}  \le \tau _{\rm{F}}^{\rm{c}},{e_k} \ge 0,\forall k \in {\cal K}_a^t.\\
\label{C11-e}&\left| {{{\left[ {\bf{v}} \right]}_n}} \right| = 1,\forall n.
\end{align}
\end{subequations}
To deal with the non-convex constraint \eqref{C11-c}, a set of slack variables $\left\{ {{Y_k}} \right\}$ are introduced and thereby problem \eqref{C11} can be reformulated as follows
\begin{subequations}\label{C12}
\begin{align}
\label{C12-a}\mathop {\min }\limits_{\left\{ {{\bf{v}},{e_k},{{\bm{\tau }}^{\rm{F}}},E_{{\rm{F}},k}^{\rm{c}},{Y_k}} \right\}} &\tau _{\rm{F}}^{\rm{c}} + \tau _t^{{\rm{loc}}}\\
\label{C12-b}{\rm{s.t.}}\;\;\;\;\;\;\;\;\;&B{e_k}{\log _2}\left( {1 \!\!+\!\! \frac{{{Y_k}}}{{{e_k}B{\sigma ^2}}}} \right) \!\ge\! s,\forall k \in {\cal K}_a^t,\\
\label{C12-c}&{Y_k} \le E_{{\rm{F,}}k}^{\rm{c}}{\left| {{h_{d,k}} + {\bf{f}}_k^H{\bf{v}}} \right|^2},\forall k \in {\cal K}_a^t,\\
\label{C12-d}&\eqref{C11-b},\eqref{C11-d},\eqref{C11-e}.
\end{align}
\end{subequations}
Note that constraint \eqref{C12-c} is met with equality at the optimal solution of problem \eqref{C12}, since otherwise the objective value can be always decreased by increasing ${{Y_k}}$ until \eqref{C12-c} becomes active. However, constraints \eqref{C12-c} and \eqref{C11-e} are still non-convex. Nevertheless, we observe that ${{{{\left| {{h_{d,k}} + {\bf{f}}_k^H{\bf{v}}} \right|}^2}} \mathord{\left/
 {\vphantom {{{{\left| {{h_{d,k}} + {\bf{f}}_k^H{\bf{v}}} \right|}^2}} {\frac{1}{{E_{{\rm{F,}}k}^{\rm{c}}}}}}} \right.
 \kern-\nulldelimiterspace} {\frac{1}{{E_{{\rm{F,}}k}^{\rm{c}}}}}}$ is jointly convex with respect to ${\bf{v}}$ and $1/E_{{\rm{F,}}k}^{\rm{c}}$, which motivates us to employ the SCA technique to address this issue. At an arbitrarily given point $\left\{ {{\bf{\bar v}},\bar E_{{\rm{F,}}k}^{\rm{c}}} \right\}$, we adopt first-order Taylor expansion to obtain a lower bound of $E_{{\rm{F,}}k}^{\rm{c}}{\left| {{h_{d,k}} + {\bf{f}}_k^H{\bf{v}}} \right|^2}$ as follows
 \begin{align}\label{SCA_bound}
E_{{\rm{F}},k}^{\rm{c}}{\left| {{h_{d,k}} + {\bf{f}}_k^H{\bf{v}}} \right|^2} \ge&  - \frac{1}{{E_{{\rm{F}},k}^{\rm{c}}}}{\left( {\bar E_{{\rm{F}},k}^{\rm{c}}} \right)^2}{\left| {{h_{d,k}} + {\bf{f}}_k^H{\bf{\bar v}}} \right|^2}\nonumber\\
&+ 2\bar E_{{\rm{F}},k}^{\rm{c}}{\mathop{\rm Re}\nolimits} \left( {\left( {h_{d,k}^H{\bf{f}}_k^H + {{{\bf{\bar v}}}^H}{{\bf{f}}_k}{\bf{f}}_k^H} \right){\bf{v}}} \right)\nonumber\\
& + 2\bar E_{{\rm{F}},k}^{\rm{c}}\left( {{{\left| {{h_{d,k}}} \right|}^2} + {\mathop{\rm Re}\nolimits} \left( {\bar E_{{\rm{F}},k}^{\rm{c}}h_{d,k}^H{\bf{f}}_k^H{\bf{\bar v}}} \right)} \right)\nonumber\\
&\buildrel \Delta \over = g_k^{{\rm{lb}}}\left( {{\bf{v}},E_{{\rm{F}},k}^{\rm{c}}} \right).
\end{align}
Note that $g_k^{{\rm{lb}}}\left( {{\bf{v}},E_{{\rm{F}},k}^{\rm{c}}} \right)$ is jointly concave with respect to ${\bf{v}}$ and ${E_{{\rm{F}},k}^{\rm{c}}}$. As such, we transform constraint \eqref{C12-c} as
\begin{align}\label{new_constraint}
{Y_k} \le g_k^{{\rm{lb}}}\left( {{\bf{v}},E_{{\rm{F}},k}^{\rm{c}}} \right),\forall k \in {\cal K}_a^t,
\end{align}
which is a convex constraint. The remaining challenging for solving problem \eqref{C12} is constraint \eqref{C11-e}. To address this issue, we relax the unit-modules constraints as
\begin{align}\label{relaxed_module}
\left| {{{\left[ {\bf{v}} \right]}_n}} \right| \le 1,\forall n.
\end{align}
Then, we approximate problem \eqref{C12} as
\begin{subequations}\label{C13}
\begin{align}
\label{C13-a}\mathop {\min }\limits_{\left\{ {{\bf{v}},{e_k},{{\bm{\tau }}^{\rm{F}}},E_{{\rm{F}},k}^{\rm{c}},{Y_k}} \right\}} &\tau _{\rm{F}}^{\rm{c}} + \tau _t^{{\rm{loc}}}\\
\label{C13-b}{\rm{s.t.}}\;\;\;\;\;\;\;\;\;&\eqref{C12-b}, \eqref{new_constraint},\eqref{C11-b},\eqref{C11-d},\eqref{relaxed_module}.
\end{align}
\end{subequations}
Problem \eqref{C13} is convex and thereby it can be solved successively by using CVX until the convergence is achieved. Note that the converged solution, denoted by ${{\bf{v}}^*}$, may not satisfy constraint \eqref{C11-e}. For this case, one feasible solution of ${\bf{v}}$ satisfying \eqref{C11-e} can be constructed as
\begin{align}\label{phase_reconstruct}
{\left[ {{\bf{\tilde v}}} \right]_n} = {\left[ {{{\bf{v}}^*}} \right]_n}/\left| {{{\left[ {{{\bf{v}}^*}} \right]}_n}} \right|,\forall n.
\end{align}
Under the obtained ${{\bf{\tilde v}}}$, we obtain the solution of the remaining optimization variables by using the proposed method. The computational complexity of the proposed algorithm is ${\cal{O}}\left( {{I_{{\rm{iter}}}}{{\left( {N + 3K} \right)}^{3.5}}} \right)$, where ${{I_{{\rm{iter}}}}}$ denotes the number of iterations to perform SCA.

\subsubsection{Optimization of device scheduling}
Then, we propose a coordinate descent method to optimize device scheduling variables $\left\{ {a_k^t} \right\}$. Recall that ${\bf{a}} = \left[ {a_1^t, \ldots ,a_K^t} \right]$. Based on the coordinate descent, the direction of only one variable $a_k^t$ is updated successively. Specifically, with an initial ${{\bf{a}}^{\left( 0 \right)}} = \left[ {1, \ldots ,1} \right]$, we denote ${{\bf{a}}^{\left( {l - 1} \right)}} = \left[ {a_1^{t,\left( {l - 1} \right)}, \ldots ,a_K^{t,\left( {l - 1} \right)}} \right]$ as the device scheduling vector in the ${\left( {l - 1} \right)}$-th iteration, $l \in \left\{ {1,2, \ldots } \right\}$. Accordingly, we denote ${\tau _{\rm{F}}}\left( {{{\bf{a}}^{\left( {l - 1} \right)}}} \right)$ as the objective value of problem \eqref{C10} under ${\bf{a}} = {{\bf{a}}^{\left( {l - 1} \right)}}$, which can be obtained based on our proposed algorithm. Let
\begin{align}\label{reward}
\Delta \tau _{{\rm{F}},k}^{\left( l \right)} = {\tau _{\rm{F}}}\left( {{{\bf{a}}^{\left( {l - 1} \right)}}\left( k \right)} \right) - {\tau _{\rm{F}}}\left( {{{\bf{a}}^{\left( {l - 1} \right)}}} \right),\forall k,
\end{align}
where
\begin{align}\label{swapping}
{{\bf{a}}^{\left( {l - 1} \right)}}\left( k \right) = \left[ {a_1^{t,\left( {l - 1} \right)}, \ldots ,1 - a_k^{t,\left( {l - 1} \right)}, \ldots ,a_K^{t,\left( {l - 1} \right)}} \right].
\end{align}
In the $l$-th iteration, the scheduling vector is updated as ${{\bf{a}}^{\left( l \right)}} = {{\bf{a}}^{\left( {l - 1} \right)}}\left( {k_l^*} \right)$ with $k_l^* = {{\mathop{\rm argmin}\nolimits} _{k \in {\cal K}}}\Delta \tau _{{\rm{F}},k}^{\left( l \right)}$. The iterations proceed until constraint \eqref{C1-e} becomes infeasible.
\vspace{-10pt}
\subsection{Proposed Solution to I-NOMA}
By exploiting Proposition 1, problem \eqref{C3} of the I-NOMA case can be equivalently reformulated as
\begin{subequations}\label{C14}
\begin{align}
\label{C14-a}\mathop {\min }\limits_{{\bf{\Theta }},{\bf{a}},{{\bm{\tau }}^{\rm{N}}},{\bf{p}}} \;&\tau _{\rm{N}}^{\rm{c}} + \tau _t^{{\rm{loc}}}\\
\label{C14-b}{\rm{s.t.}}\;\;\;\;\;\;&{p_k}\tau _{\rm{N}}^{\rm{c}} + \xi \frac{{{{\left( {{C_k}{D_k}} \right)}^3}}}{{{{\left( {\tau _t^{{\rm{loc}}}} \right)}^2}}} \le E_k^{\max },~\forall k,\\
\label{C14-c}&\eqref{C1-e},\eqref{C1-g}, \eqref{C2-e},\eqref{C3-c}.
\end{align}
\end{subequations}
Similar to Section III-B, we exploit the coordinate descent method to decouple the device scheduling and resource allocation. First, we focus on the case of the joint optimization of $\left\{ {{\bf{\Theta }},{{\bm{\tau }}^{\rm{N}}},{\bf{p}}} \right\}$ under the given set of scheduled devices ${\cal K}_a^t$. Let $E_{{\rm{N}},k}^c = {p_k}\tau _{\rm{N}}^{\rm{c}}$ and then constraint \eqref{C3-c} becomes
\begin{align}\label{capacity_region2}
{{\bf{r}}^{\rm{N}}} \in {{\tilde {\cal R}}_{\rm{N}}}\left( {\left\{ {E_{{\rm{N}},k}^c} \right\},{\bf{\Theta }},\tau _{\rm{N}}^{\rm{c}},{\cal K}_a^t} \right),r_k^{\rm{N}}\tau _{\rm{N}}^{\rm{c}} \ge s,\forall k \in {\cal K}_a^t,
\end{align}
where
\begin{align}\label{capacity_region3}
\begin{array}{l}
{{\tilde {\cal R}}_{\rm{N}}}\left( {\left\{ {E_{{\rm{N}},k}^c} \right\},{\bf{\Theta }},\tau _{\rm{N}}^{\rm{c}},{\cal K}_a^t} \right)\\
 = \left\{ {\begin{array}{*{20}{l}}
{{{\bf{r}}^{\rm{N}}} \in \mathbb{R}_ + ^{\left| {{\cal K}_a^t} \right| \times 1}:\sum\nolimits_{k \in {\cal J}} {r_k^N} }\\
{ \le B{\tau _{\rm{N}}^{\rm{c}}}{{\log }_2}\left( {1 \!\!+\!\! \frac{{\sum\nolimits_{k \in {\cal J}} {E_{{\rm{N}},k}^c{{\left| {{h_k}\left( {\bf{\Theta }} \right)} \right|}^2}} }}{{B\tau _{\rm{N}}^{\rm{c}}{\sigma ^2}}}} \right),\forall {\cal J} \subseteq {\cal K}_a^t}
\end{array}} \right\},
\end{array}
\end{align}

The corresponding optimization problem under the given ${\cal K}_a^t$ can be written as
\begin{subequations}\label{C15}
\begin{align}
\label{C15-a}\mathop {\min }\limits_{{\bf{\Theta }},{{\bm{\tau }}^{\rm{N}}},{\left\{ {E_{{\rm{N}},k}^c} \right\}}} \;&\tau _{\rm{N}}^{\rm{c}} + \tau _t^{{\rm{loc}}}\\
\label{C15-b}{\rm{s.t.}}\;\;\;\;\;\;&E_{{\rm{N}},k}^c + \xi \frac{{{{\left( {{C_k}{D_k}} \right)}^3}}}{{{{\left( {\tau _t^{{\rm{loc}}}} \right)}^2}}} \le E_k^{\max },~\forall k,\\
\label{C15-c}&\eqref{C2-e},\eqref{capacity_region2}.
\end{align}
\end{subequations}
For problem \eqref{C15}, \eqref{capacity_region2} is intractable since the number of constraints in \eqref{capacity_region2} is ${2^{\left| {{\cal K}_a^t} \right|}} - 1$.  Without loss of generality, we assume that ${S_1} \le  \ldots  \le {S_{\left| {{\cal K}_a^t} \right|}}$ , where ${S_k} = E_{{\rm{N}},k}^c{\left| {{h_k}\left( {\bf{\Theta }} \right)} \right|^2}$. Then, it is observed that constraint \eqref{capacity_region2} is equivalent to
\begin{align}\label{capacity_region4}
B\tau _{\rm{N}}^{\rm{c}}{\log _2}\left( {1 + \frac{{\sum\nolimits_{k = 1}^m {{S_k}} }}{{B\tau _{\rm{N}}^{\rm{c}}{\sigma ^2}}}} \right) \ge ms, m = 1, \ldots ,\left| {{\cal K}_a^t} \right|.
\end{align}
Compared to \eqref{capacity_region2}, the number of constraints in \eqref{capacity_region4} is significantly reduced from  ${2^{\left| {{\cal K}_a^t} \right|}} - 1$ to ${\left| {{\cal K}_a^t} \right|}$. Under the fixed value of ${\tau _t^{{\rm{loc}}}}$, denoted by $\bar \tau _t^{{\rm{loc}}}$, problem \eqref{C15} is reduced to
\begin{subequations}\label{C16}
\begin{align}
\label{C16-a}\mathop {\min }\limits_{{\bf{\Theta }},\tau _{\rm{N}}^{\rm{c}}} \;&\tau _{\rm{N}}^{\rm{c}}\\
\label{C16-b}{\rm{s.t.}}\;\;&B\tau _{\rm{N}}^{\rm{c}}{\log _2}\left( {1 \!\!+\!\! \frac{{\sum\nolimits_{k = 1}^m {{{\bar S}_k}} }}{{B\tau _{\rm{N}}^{\rm{c}}{\sigma ^2}}}} \right) \!\!\ge\!\! ms, m = 1, \ldots ,\left| {{\cal K}_a^t} \right|\\
\label{C16-c}&\eqref{C2-e},
\end{align}
\end{subequations}
where ${{\bar S}_k} = \bar E_{{\rm{N}},k}^c{\left| {{h_k}\left( {\bf{\Theta }} \right)} \right|^2}$ with $\bar E_{{\rm{N}},k}^c = E_k^{\max } - \xi \frac{{{{\left( {{C_k}{D_k}} \right)}^3}}}{{{{\left( {\bar \tau _t^{{\rm{loc}}}} \right)}^2}}}$. It is noticed that $\tau _{\rm{N}}^{\rm{c}}$ decreases with respect to ${\sum\nolimits_{k = 1}^m {{{\bar S}_k}} }$, which motivates us to maximize $\sum\nolimits_{k = 1}^{\left| {{\cal K}_a^t} \right|} {{{\bar S}_k}}$ by optimizing ${\bf{\Theta }}$. Hence, we consider the following optimization problem
\begin{subequations}\label{C17}
\begin{align}
\label{C17-a}\mathop {\max }\limits_{\bf{v}} \;\;&\sum\nolimits_{k = 1}^{\left| {{\cal K}_a^t} \right|} {\bar E_{{\rm{N}},k}^c{{\left| {{h_{d,k}} + {\bf{f}}_k^H{\bf{v}}} \right|}^2}}\\
\label{C17-b}{\rm{s.t.}}\;\;&\left| {{{\left[ {\bf{v}} \right]}_{n,n}}} \right| = 1,~\forall n,
\end{align}
\end{subequations}
where ${\bf{v}} = {\left[ {{e^{j{\theta _1}}}, \ldots ,{e^{j{\theta _N}}}} \right]^T}$ and ${\bf{f}}_k^H = {{\bf{g}}^H}{\rm{diag}}\left( {{{\bf{h}}_{r,k}}} \right)$. Problem \eqref{C17} is a non-convex one due to the non-concave objective function and the unit-modules constraints of IRS phase-shifts. Nevertheless, the convexity of the objective function \eqref{C17-a} facilitates us to employ the SCA technique to solve it. By taking the first-order Taylor expansion of \eqref{C17-a} at arbitrarily feasible point ${{\bf{\bar v}}}$, we obtain a lower bound of \eqref{C17-a} as
\begin{align}\label{channel_gain_lower_bound}
2{\mathop{\rm Re}\nolimits} \left( {\left( {\sum\nolimits_{k = 1}^{\left| {{\cal K}_a^t} \right|} {\bar E_{{\rm{N}},k}^c{{{\bf{\bar v}}}^H}{{\bf{f}}_k}{\bf{f}}_k^H + h_{d,k}^H{\bf{f}}_k^H} } \right){\bf{v}}} \right) + Z,
\end{align}
where $Z = \sum\nolimits_{k = 1}^{\left| {{\cal K}_a^t} \right|} {\bar E_{{\rm{N}},k}^c{Z_k}}$ is a constant with
\begin{align}\label{Z_k}
{Z_k} =  - {\left| {{h_{d,k}} \!+\! {\bf{f}}_k^H{\bf{\bar v}}} \right|^2} \!+\! 2{\mathop{\rm Re}\nolimits} \left( {{{\left( {{h_{d,k}} \!+\! {\bf{f}}_k^H{\bf{\bar v}}} \right)}^H}{h_{d,k}}} \right).
\end{align}
The optimal solution of ${\bf{v}}$ for maximizing \eqref{channel_gain_lower_bound} is derived as
\begin{align}\label{local_optimal_v}
{\bf{\hat v}} = \exp \left( {j\arg \left( {\sum\nolimits_{k = 1}^{\left| {K_a^t} \right|} {\bar E_{{\rm{N}},k}^c\left( {{{\bf{f}}_k}{\bf{f}}_k^H{\bf{\bar v}} + {{\bf{f}}_k}{h_{d,k}}} \right)} } \right)} \right).
\end{align}
Based on \eqref{local_optimal_v}, one local optimal solution of \eqref{C17}, denoted by ${{\bf{v}}^\star}$, can be obtained by maximizing \eqref{channel_gain_lower_bound} iteratively  until convergence is achieved.

Let $\bar S_k^\star = \bar E_{{\rm{N}},k}^c{\left| {{h_{d,k}} + {\bf{f}}_k^H{{\bf{v}}^\star}} \right|^2}$ and $\bar S_{\pi \left( 1 \right)}^ \star  \le  \ldots  \le \bar S_{\pi \left( K \right)}^ \star$. Then, the optimal $\tau _{\rm{N}}^c$ under the obtained $\bar S_k^\star$ can be derived as
\begin{align}\label{optimal_NOMA_time}
{\left( {\tau _{\rm{N}}^c} \right)^ \star } \!\!=\!\! \mathop {\max }\limits_{m \in \left\{ {1, \ldots \left| {K_a^t} \right|} \right\}} \left( { - \frac{{ms\sum\nolimits_{k = 1}^m {\bar S_{\pi \left( k \right)}^ \star } }}{{B\left( {ms{\sigma ^2} \!\!+\!\! \frac{{{W_{ - 1}}\left( {{X_m}} \right)\sum\nolimits_{k = 1}^m {\bar S_{\pi \left( k \right)}^ \star } }}{{\ln 2}}} \right)}}} \right),
\end{align}
where ${X_m} =  - \frac{{{2^{ - \frac{{ms{\sigma ^2}}}{{\sum\nolimits_{k = 1}^m {\bar S_{\pi \left( k \right)}^ \star } }}}}ms{\sigma ^2}\ln 2}}{{\sum\nolimits_{k = 1}^m {\bar S_{\pi \left( k \right)}^ \star } }}$.
Similar to the discussion in Section IV-A, the optimized value of $\tau _t^{{\rm{loc}}}$ to problem \eqref{C15} can be obtained by a 1-D search. In the outer layer, the device scheduling variable $\left\{ {a_k^t} \right\}$ is updated via the coordinate descent method as discussed in Section IV-B. These details are omitted due to its brevity.

\vspace{-12pt}
\subsection{Further Discussion}
Note that the proposed optimization framework for latency minimization problems is also applicable to the objective for minimizing the bound of optimality gap. Recall from Theorem 1 that decreasing the upper bound of the optimality gap in \eqref{optimality_gap} increases with respect to ${A_t}$ in \eqref{e_t_bound}. Taking the case of I-TDMA as an example, the corresponding optimization problem to minimize ${A_t}$ under the latency constraint is formulated as
\begin{subequations}\label{C18}
\begin{align}
\label{C18-a}\mathop {\min }\limits_{\left\{ {{{\bf{\Theta }}_k}} \right\},{\bf{a}},{{\bm{\tau }}^{\rm{T}}},{\bf{p}},{\bf{f}}} &\frac{{2\varepsilon }}{{L{D^2}}}{\left( {\sum\nolimits_{k = 1}^K {\left( {1 - a_k^t} \right){D_k}} } \right)^2}\\
\label{C18-b}{\rm{s.t.}}\;\;\;\;\;\;&\sum\nolimits_{k = 1}^K {\tau _k^{\rm{c}}}  + \tau _t^{{\rm{loc}}} \le \bar \tau,\\
\label{C18-c}&\eqref{C1-b}, \eqref{C1-c},\eqref{C1-d},\eqref{C1-f}, \eqref{C1-g},\eqref{C1-h}.
\end{align}
\end{subequations}
Under the given target value of $\kappa$, we denote the optimal objective value of problem \eqref{C1} as $\tau _{\rm{T}}^*\left( \kappa  \right)$, which can be easily obtained by using the algorithm proposed in Section IV-A. It is evident that $\tau _{\rm{T}}^*\left( \kappa  \right)$ is non-increasing with respect to $\kappa $. Hence, the optimal value of problem \eqref{C18} is the minimum of $\kappa $ satisfying $\tau _{\rm{T}}^*\left( \kappa  \right) \le \bar \tau$, denoted by $A_t^*$, which can be efficiently obtained via a bisection search.

It is observed from \eqref{C18-a} that $A_t^*$ is lower bounded by $0$, which can be achieved by scheduling all edge devices. To shed light on the impact of the IRS on the device scheduling, we unveil a sufficient condition for achieving a full scheduling under the specific latency requirement.
\begin{pos}
The optimal device scheduling of problem \eqref{C18} is $a_k^t = 1,\forall k$, provided that
\begin{align}\label{full_scheduling_condition}
E_k^{\max } \!>\! \frac{{\xi {{\left( {{C_k}{D_k}} \right)}^3}}}{{{{\bar \tau }^2}}}, N \!\ge\! \sqrt {\frac{{\left( {{2^{sK/\left( {B\bar \tau _{\rm{T}}^{\rm{c}}} \right)}} - 1} \right)B\bar \tau _{\rm{T}}^{\rm{c}}{\sigma ^2}}}{{K\bar E_k^{\rm{c}}\rho _r^2\rho _g^2}}},
\end{align}
\end{pos}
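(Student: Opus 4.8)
The plan is to establish sufficiency by exhibiting a single feasible point of problem \eqref{C18} with $a_k^t=1,\forall k$, since the objective \eqref{C18-a} is lower bounded by $0$ and this bound is attained exactly when full scheduling fits within the latency budget $\bar\tau$. First I would apply Proposition 1 to substitute $f_k=C_kD_k/\tau_t^{\rm loc}$, so that the local-computation energy of device $k$ equals $\xi(C_kD_k)^3/(\tau_t^{\rm loc})^2$. This quantity is strictly decreasing in $\tau_t^{\rm loc}$, so the first inequality $E_k^{\max}>\xi(C_kD_k)^3/\bar\tau^2$ guarantees that the admissible interval $\tau_t^{\rm loc}\in\big((\xi(C_kD_k)^3/E_k^{\max})^{1/2},\bar\tau\big)$ is nonempty; that is, there exists a split $\bar\tau=\tau_t^{\rm loc}+\bar\tau_{\rm T}^{\rm c}$ for which both the residual communication energy $\bar E_k^{\rm c}=E_k^{\max}-\xi(C_kD_k)^3/(\tau_t^{\rm loc})^2$ and the uploading window $\bar\tau_{\rm T}^{\rm c}$ are strictly positive. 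I would fix any such split, with all devices finishing their local training simultaneously as required by \eqref{C1-d}.

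For the uploading stage I would adopt the simplest feasible allocation: an equal time-share $\tau_k^{\rm c}=\bar\tau_{\rm T}^{\rm c}/K$, full expenditure of the residual energy $p_k=\bar E_k^{\rm c}/\tau_k^{\rm c}=K\bar E_k^{\rm c}/\bar\tau_{\rm T}^{\rm c}$, and each ${\bf{\Theta}}_k$ set to the coherent-combining optimum \eqref{optimal_phaseshift}. Under this allocation, constraint \eqref{C1-c} for device $k$ collapses, after taking $2^{(\cdot)}$ of both sides of the rate inequality, to the scalar test $p_k\gamma_k^*\ge(2^{sK/(B\bar\tau_{\rm T}^{\rm c})}-1)B\sigma^2$ with $\gamma_k^*=|h_k({\bf{\Theta}}_k^*)|^2$.

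The decisive step is to lower-bound $\gamma_k^*$ by the IRS passive-beamforming gain. Because \eqref{optimal_phaseshift} rotates every reflected path into phase with the direct link, $\sqrt{\gamma_k^*}=|h_{d,k}|+\sum_{n=1}^N\big|[{\rm diag}({\bf{g}}^H){\bf{h}}_{r,k}]_n\big|$; bounding the per-element magnitudes of the IRS--AP and device--IRS links below by $\rho_g$ and $\rho_r$ yields $\gamma_k^*\ge N^2\rho_g^2\rho_r^2$. Substituting this bound together with $p_k=K\bar E_k^{\rm c}/\bar\tau_{\rm T}^{\rm c}$ into the scalar test and solving for $N$ reproduces exactly the second inequality of \eqref{full_scheduling_condition}. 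Hence, whenever both inequalities hold, the constructed point is feasible with $a_k^t=1,\forall k$, which drives the optimal objective of \eqref{C18} to $0$ and certifies full scheduling as optimal.

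I expect the quadratic gain bound $\gamma_k^*\ge N^2\rho_g^2\rho_r^2$ to be the main obstacle: it is the only place where the ${\cal O}(N^2)$ array gain that overcomes the straggler issue enters the feasibility test, and it hinges on reading $\rho_g,\rho_r$ as uniform per-element amplitude floors together with the exact phase alignment of \eqref{optimal_phaseshift}. Any per-element heterogeneity only strengthens the bound, so the uniform-floor construction used here is the worst case and the argument remains tight.
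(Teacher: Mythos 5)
Your proposal is correct and follows essentially the same route as the paper: both construct an explicit feasible point with full scheduling via a positive time split guaranteed by the first inequality, equal time-sharing $\tau_k^{\rm c}=\bar\tau_{\rm T}^{\rm c}/K$ with full residual-energy expenditure, the per-device optimal phase-shifts \eqref{optimal_phaseshift}, and the lower bound $\gamma_k^*\ge N^2\rho_r^2\rho_g^2$ obtained by dropping the direct link and flooring the per-element cascaded amplitudes. The only cosmetic difference is that you exponentiate the rate constraint into a scalar SNR test while the paper bounds the uploaded bits directly; the computation is identical.
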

where ${\rho _r}{\rho _g} = \mathop {\min }\limits_{k \in {\cal K},n \in {\cal N}} \left| {{{\left[ {{{\bf{h}}_{r,k}}} \right]}_n}{{\left[ {\bf{g}} \right]}_n}} \right|$, $\bar \tau _{\rm{T}}^{\rm{c}} = \bar \tau  - \bar \tau _t^{{\rm{loc}}}$, $\bar E_k^{\rm{c}} = E_k^{\max } - \xi {\left( {{C_k}{D_k}} \right)^3}/{\left( {\bar \tau _t^{{\rm{loc}}}} \right)^2}$ with ${\bar \tau _t^{{\rm{loc}}}}$ denoting any feasible value of $\tau _t^{{\rm{loc}}}$ satisfying $E_k^{\max } > \xi {\left( {{C_k}{D_k}} \right)^3}/{\left( {\tau _t^{{\rm{loc}}}} \right)^2}$.
\begin{proof}
Under the condition that $E_k^{\max } \!\!>\!\! \xi {\left( {{C_k}{D_k}} \right)^3}/{{\bar \tau }^2}$,$\forall k$, there always exists $\bar \tau _t^{{\rm{loc}}} \in \left( {0,\bar \tau } \right)$, which satisfies $E_k^{\max } > \xi {\left( {{C_k}{D_k}} \right)^3}/{\left( {\bar \tau _t^{{\rm{loc}}}} \right)^2},\forall k$. Let $\bar E_k^{\rm{c}} = E_k^{\max } - \xi {\left( {{C_k}{D_k}} \right)^3}/{\left( {\bar \tau _t^{{\rm{loc}}}} \right)^2}$ and $\bar \tau _k^{\rm{c}} = \left( {\bar \tau  - \bar \tau _t^{{\rm{loc}}}} \right)/K,\forall k$. Then, the transmit power of device $k$ is given by ${p_k} = \bar E_k^{\rm{c}}/\bar \tau _k^{\rm{c}}$ and thereby the number of uploaded bits of device $k$ can be derived as
\begin{align}\label{uploaded bits}
r_k^{\rm{T}}\tau _k^{\rm{c}} &= \frac{{B\bar \tau _{\rm{T}}^{\rm{c}}}}{K}{\log _2}\left( {1 + \frac{{\bar E_k^{\rm{c}}K{{\left| {{h_{d,k}} + {{\bf{g}}^H}{\bf{\Theta }}_k^*h_{r,k}^H} \right|}^2}}}{{\bar \tau _{\rm{T}}^{\rm{c}}B{\sigma ^2}}}} \right)\nonumber\\
& \ge \frac{{B\bar \tau _{\rm{T}}^{\rm{c}}}}{K}{\log _2}\left( {1 + \frac{{\bar E_k^{\rm{c}}K{{\left| {{{\bf{g}}^H}{\bf{\Theta }}_k^*h_{r,k}^H} \right|}^2}}}{{\bar \tau _{\rm{T}}^{\rm{c}}B{\sigma ^2}}}} \right)\nonumber\\
&\ge \frac{{B\bar \tau _{\rm{T}}^{\rm{c}}}}{K}{\log _2}\left( {1 + \frac{{\bar E_k^{\rm{c}}K\rho _r^2\rho _g^2{N^2}}}{{\bar \tau _{\rm{T}}^{\rm{c}}B{\sigma ^2}}}} \right).
\end{align}
According to \eqref{uploaded bits}, $r_k^{\rm{T}}\tau _k^{\rm{c}} \ge s,\forall k$ always holds when \eqref{full_scheduling_condition} is satisfied, which implies that $a_k^t = 1,\forall k$ is feasible for problem \eqref{C18}. Thus, the proof is completed.
\end{proof}

In Proposition 4, $\xi {\left( {{C_k}{D_k}} \right)^3}/{{\bar \tau }^2}$ represents the minimum energy required for the local model training at device $k$. Under the condition that $E_k^{\max } > \xi {\left( {{C_k}{D_k}} \right)^3}/{{\bar \tau }^2}$, Proposition 4 explicitly answers the question on how many IRS elements are needed to support a full device scheduling. When \eqref{full_scheduling_condition} is satisfied, the optimality gap in \eqref{optimality_gap} is reduced to
\begin{align}\label{optimality_gap1}
F\left( {{{\bf{w}}_T}} \right) - {F^*} \le {\left( {1 - \frac{\delta }{L}} \right)^T}\left( {F\left( {{{\bf{w}}_0}} \right) - {F^*}} \right),
\end{align}
which becomes zero as $T \to \infty $. This confirms the usefulness of deploying IRSs to achieve a lossless FL model.

\vspace{-10pt}
\section{NOMA Versus OMA}
In this section, we provide a theoretical framework to compare the per-round latency achieved by the I-NOMA and I-OMA. For ease of illustration, we denote the optimal value of problem \eqref{C1}, \eqref{C2}, and \eqref{C3} as $\tau _{{\rm{TDMA}}}^*$, $\tau _{{\rm{FDMA}}}^*$, and $\tau _{{\rm{NOMA}}}^*$, respectively. First, the per-round latency achieved by the I-TDMA and I-FDMA is compared, whose relationship is illustrated in the following theorem.
\begin{thm}
It follows that $\tau _{{\rm{TDMA}}}^* \le \tau _{{\rm{FDMA}}}^*$, where the equality holds if and only if ${\bf{\Theta }}_1^* =  \ldots  = {\bf{\Theta }}_K^*$ with ${\bf{\Theta }}_k^*$ denoting the optimal IRS phase-shifts for device $k$ defined in \eqref{optimal_phaseshift}.
\end{thm}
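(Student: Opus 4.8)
The plan is to collapse both optimization problems onto a single latency function that depends only on the per-device effective channel gains, so that the phase-shift freedom becomes the sole source of any gap. By Proposition~1 the computation variables are pinned once $\tau_t^{\rm loc}$ is fixed, so for a common scheduling $\bf a$ and a common $\tau_t^{\rm loc}$ the available communication energies $\bar E_k^{\rm c}=E_k^{\max}-\xi(C_kD_k)^3/(\tau_t^{\rm loc})^2$ are identical in the two schemes. I would first establish a time--bandwidth duality. Using the paper's own variable $e_k=b_k\tau_{\rm F}^{\rm c}$ from \eqref{C11}, the I-FDMA rate constraint \eqref{C2-c} reads $Be_k\log_2\!\big(1+\bar E_k^{\rm c}|h_k({\bf\Theta})|^2/(e_kB\sigma^2)\big)\ge s$, which is \emph{formally identical} to the I-TDMA constraint \eqref{C4-c} under the correspondence $\tau_k^{\rm c}\leftrightarrow e_k$, while the bandwidth budget $\sum_k b_k\le 1$ becomes the total-resource budget $\sum_k e_k\le\tau_{\rm F}^{\rm c}$. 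Since $x\mapsto Bx\log_2(1+c/x)$ is increasing (this is exactly the function $h_k^{\rm T}$ shown monotone in Proposition~2), minimizing $\tau_{\rm F}^{\rm c}$ drives $\tau_{\rm F}^{\rm c}=\sum_k e_k$ with each $e_k$ at its minimal feasible value. Hence, \emph{for the same per-device gains} $\{\gamma_k\}$, both schemes attain the identical minimal communication latency $\Phi(\{\gamma_k\})\buildrel\Delta\over=\sum_k T(\gamma_k)$, where $T(\gamma_k)$ solves the rate constraint with equality; moreover $T$, and thus $\Phi$, is strictly decreasing in each $\gamma_k$.

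With this reduction the inequality is immediate. I-TDMA may assign device $k$ its dedicated pattern ${\bf\Theta}_k^*$, attaining the maximal gain $\gamma_k^*=|h_k({\bf\Theta}_k^*)|^2$, whereas I-FDMA is forced to share one $\bf\Theta$, so $|h_k({\bf\Theta})|^2\le\gamma_k^*$ for every $k$ by the optimality of \eqref{optimal_phaseshift}. Therefore $\tau_{\rm FDMA}^*=\min_{{\bf\Theta},\tau_t^{\rm loc},{\bf a}}\big[\Phi(\{|h_k({\bf\Theta})|^2\})+\tau_t^{\rm loc}\big]\ge\min_{\tau_t^{\rm loc},{\bf a}}\big[\Phi(\{\gamma_k^*\})+\tau_t^{\rm loc}\big]=\tau_{\rm TDMA}^*$ by monotonicity of $\Phi$. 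Equivalently, I can argue constructively: from an optimal I-FDMA solution, setting $\tau_k^{\rm c}=b_k\tau_{\rm F}^{\rm c}$ while keeping the common $\bf\Theta$ yields a feasible I-TDMA solution of equal latency, which can then only be improved by swapping in the per-device ${\bf\Theta}_k^*$.

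For the equality condition I would prove both directions. If ${\bf\Theta}_1^*=\cdots={\bf\Theta}_K^*\buildrel\Delta\over={\bf\Theta}^\star$, then taking the optimal I-TDMA solution and reversing the duality, with $b_k=(\tau_k^{\rm c})^*/\sum_j(\tau_j^{\rm c})^*$ and common pattern ${\bf\Theta}^\star$, produces a feasible I-FDMA solution of latency $\tau_{\rm TDMA}^*$, so $\tau_{\rm FDMA}^*\le\tau_{\rm TDMA}^*$ and equality follows. Conversely, if the ${\bf\Theta}_k^*$ are not all equal, then no single $\bf\Theta$ can meet $|h_k({\bf\Theta})|^2=\gamma_k^*$ for every scheduled device, because \eqref{optimal_phaseshift} is the \emph{unique} maximizer of $|h_k(\cdot)|^2$; hence at least one scheduled device has $|h_k({\bf\Theta})|^2<\gamma_k^*$, and the strict monotonicity of $T$ turns the first inequality in the chain above into a strict one, giving $\tau_{\rm TDMA}^*<\tau_{\rm FDMA}^*$.

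The main obstacle I anticipate is making the time--bandwidth duality fully rigorous, since it carries the entire comparison: I must verify that the substitution $e_k=b_k\tau_{\rm F}^{\rm c}$ preserves feasibility in both directions and that the bandwidth budget translates exactly into a total-time budget, so that the two protocols are separated \emph{only} by phase-shift freedom. A secondary subtlety is the device-scheduling variable $\bf a$ in the equality direction: the clean ``iff'' presumes the two optima share a scheduled set (which holds, e.g., in the full-scheduling regime of Proposition~4), so I would either argue under a common scheduled set or confirm that an unequal pattern among the scheduled devices already forces strictness. Finally, the uniqueness of the maximizer in \eqref{optimal_phaseshift}, invoked for the strict direction, should be noted to hold whenever the reflected channel coefficients are nonzero.
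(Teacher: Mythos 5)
Your proposal is correct and follows essentially the same route as the paper: the substitution $e_k=b_k\tau_{\rm F}^{\rm c}$ (the paper's $\tau_{{\rm F},k}^{\rm c}=b_k\tau_{\rm F}^{\rm c}$ with $\sum_k b_k=1$ active at optimality) recasts the I-FDMA problem as the I-TDMA problem with a single shared $\bf{\Theta}$, so the feasible set of the former is contained in that of the latter and the gap is attributable solely to the per-device phase-shift freedom. Your treatment of the equality condition is in fact somewhat more careful than the paper's one-line claim, correctly flagging the scheduled-set and uniqueness-of-maximizer caveats.
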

\begin {proof}
For problem \eqref{C2} (i.e., the original optimization problem for I-FDMA), it is obvious that $\sum\nolimits_{k = 1}^K {{b_k} = 1}$ holds in constraint \eqref{C2-d} since otherwise we can always increase the value of $\sum\nolimits_{k = 1}^K {{b_k}}$ to reduce the objective value until $\sum\nolimits_{k = 1}^K {{b_k} = 1}$. Then, we introduce the slack variables as $E_{{\rm{F,}}k}^{\rm{c}} = {p_k}\tau _{\rm{F}}^c$, $\tau _{{\rm{F,}}k}^c = {b_k}\tau _{\rm{F}}^c$, $\forall k$. The objective function in \eqref{C2-a} is reduced to $\sum\nolimits_{k = 1}^K {\tau _{{\rm{F,}}k}^c + } \tau _t^{{\rm{loc}}}$ since $\sum\nolimits_{k = 1}^K {{b_k} = 1}$ holds. Then, problem \eqref{C2} can be equivalently transformed to
\begin{subequations}\label{C200}
\begin{align}
\label{C200-a}\mathop {\min }\limits_{{\bf{\Theta }},{\bf{a}},\left\{ {\tau _{{\rm{F,}}k}^c} \right\},\left\{ {E_{{\rm{F,}}k}^{\rm{c}}} \right\},{\bf{f}}} \;&\sum\nolimits_{k = 1}^K {\tau _{{\rm{F,}}k}^c + } \tau _t^{{\rm{loc}}}\\
\label{C200-b}{\rm{s.t.}}\;\;\;\;\;\;\;\;\;\;\;&{E_{{\rm{F,}}k}^{\rm{c}}} + \xi {C_k}{D_k}f_k^2 \le E_k^{\max },~\forall k,\\
\label{C200-c}&B\tau _{{\rm{F,}}k}^c{\log _2}\left( {1 + \frac{{E_{{\rm{F,}}k}^{\rm{c}}{{\left| {{h_k}\left( {\bf{\Theta }} \right)} \right|}^2}}}{{B\tau _{{\rm{F,}}k}^c{\sigma ^2}}}} \right) \nonumber\\
&\ge sa_k^t, \forall k,\\
\label{C200-f}&\eqref{C1-d},\eqref{C1-e},\eqref{C1-f},\eqref{C1-g}, \eqref{C2-e}.
\end{align}
\end{subequations}
It is evident that all feasible solutions of problem \eqref{C200} is also feasible to problem \eqref{C1}. Problem \eqref{C200} is equivalent to problem \eqref{C1} if and only if ${\bf{\Theta }}_1^* =  \ldots  = {\bf{\Theta }}_K^*$. Hence, $\tau _{{\rm{TDMA}}}^* \le \tau _{{\rm{FDMA}}}^*$ holds, which thus completes the proof.
\end{proof}

Theorem 2 indicates that I-TDMA is more preferable for the model uploading compared to I-FDMA since the IRS phase-shifts is time-selective rather than frequency-selective. Note that the time-selective channels created by the dynamic IRS beamforming is capable of further reducing the communication latency under the I-TDMA scheme. It is evident that I-NOMA always outperforms I-FDMA under the arbitrarily given IRS phase-shifts since NOMA is capacity achieving for uplink transmission. Next, we compare the per-round latency of I-TDMA and I-NOMA under the different setups. To highlight the impact of IRS, we focus on a homogeneous setup, where $E_1^{\max } =  \ldots  = E_K^{\max } = {E^{\max }}$, ${C_1} =  \ldots  = {C_K}$, ${D_1} =  \ldots  = {D_K}$. Besides, the direct AP-device link is assumed to be blocked, i.e., ${h_{d,k}} = 0$. Then, sufficient conditions for ensuring that I-TDMA outperforms I-NOMA and those of its opposite are unveiled in the following theorem.
\begin{thm}
It follows that $\tau _{{\rm{TDMA}}}^* \le \tau _{{\rm{NOMA}}}^*$ provided that
\begin{align}\label{TDMA_NOMA_condition}
\left| {{{\left[ {{{\bf{h}}_{r,1}}} \right]}_n}} \right| =  \ldots  = \left| {{{\left[ {{{\bf{h}}_{r,K}}} \right]}_n}} \right|,\forall n.
\end{align}
Besides, $\tau _{{\rm{TDMA}}}^* \ge \tau _{{\rm{NOMA}}}^*$ holds under the condition that
\begin{align}\label{NOMA_TDMA_condition}
\arg \left( {{\mathop{\rm diag}\nolimits} \left( {{{\bf{g}}^H}} \right){{\bf{h}}_{r,1}}} \right) =  \ldots  = \arg \left( {{\mathop{\rm diag}\nolimits} \left( {{{\bf{g}}^H}} \right){{\bf{h}}_{r,K}}} \right).
\end{align}
\end{thm}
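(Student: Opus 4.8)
\emph{Plan.} The idea is to reduce both problems to a pure communication-time comparison and then exploit two structural properties of the function $f(x,y)=x\log_2(1+y/x)$. First I would fix the local-computation time to a common feasible value $\bar\tau_t^{\rm loc}$ and a common scheduled subset; in the homogeneous setup this makes the per-device communication energy budget $\bar E^{\rm c}=E^{\max}-\xi(C_kD_k)^3/(\bar\tau_t^{\rm loc})^2$ identical across devices and across schemes. Since the objective increases in each $\tau^{\rm c}$ and decreases in the transmit energy, at optimum every device exhausts $\bar E^{\rm c}$, so the only scheme-dependent quantities are the effective channel gains. With $\mathbf f_k^H=\mathbf g^H\operatorname{diag}(\mathbf h_{r,k})$ and $h_{d,k}=0$, \eqref{optimal_phaseshift} gives the dedicated-reflection gain $\gamma_k^\star=\big(\sum_{n}|[\mathbf f_k]_n|\big)^2$ for I-TDMA, whereas any shared reflection $\mathbf v$ yields $|\mathbf f_k^H\mathbf v|^2\le\gamma_k^\star$ by the triangle inequality, with equality iff $\mathbf v$ phase-aligns device $k$. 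Because the comparison will be shown to hold for \emph{every} admissible subset and the scheduling constraint \eqref{C1-e} is common to both problems, minimizing over $\bar\tau_t^{\rm loc}$ and over schedules preserves each inequality.

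\emph{Direction $\tau_{\rm TDMA}^*\ge\tau_{\rm NOMA}^*$ under \eqref{NOMA_TDMA_condition}.} When the cascaded channels share a common per-element phase, a single reflection $\mathbf v^\sharp$ with $[\mathbf v^\sharp]_n=e^{-j\arg([\mathbf f_k^H]_n)}$ (independent of $k$) aligns every device, so $|\mathbf f_k^H\mathbf v^\sharp|^2=\gamma_k^\star$ for all $k$; thus I-NOMA can reproduce exactly the per-device gains I-TDMA attains with its dedicated patterns. It then remains to show that, for identical gains $\{\gamma_k^\star\}$, energies $\bar E^{\rm c}$, and local time, the I-NOMA region over the aggregate time $T=\sum_k\tau_k^{\rm c}$ contains the I-TDMA bit tuple. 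I would prove this from two properties of $f$: it is concave and positively homogeneous of degree one, hence \emph{superadditive}, giving $\sum_{k\in\mathcal J}f(\tau_k^{\rm c},\phi_k)\le f\big(\sum_{k\in\mathcal J}\tau_k^{\rm c},\sum_{k\in\mathcal J}\phi_k\big)$ with $\phi_k=\bar E^{\rm c}\gamma_k^\star/(B\sigma^2)$; and it is increasing in its first argument, so the right-hand side is bounded by $f\big(T,\sum_{k\in\mathcal J}\phi_k\big)$. This verifies every subset constraint in \eqref{capacity_region4}, so with $\mathbf v^\sharp$ I-NOMA already meets the $s$-bit demands within time $T$; as the optimal I-NOMA time can only be smaller, $\tau_{\rm NOMA}^*\le\tau_{\rm TDMA}^*$.

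\emph{Direction $\tau_{\rm TDMA}^*\le\tau_{\rm NOMA}^*$ under \eqref{TDMA_NOMA_condition}.} Equal per-element magnitudes make $|[\mathbf f_k]_n|$ and hence $\gamma_k^\star\equiv\gamma^\star$ common, so by symmetry I-TDMA assigns equal slots and its communication time for a scheduled subset of size $M$ is $M\tau^{\rm c}$, where $B\tau^{\rm c}\log_2(1+\bar E^{\rm c}\gamma^\star/(\tau^{\rm c}B\sigma^2))=s$. For I-NOMA I would keep only the necessary sum constraint ($m=M$) in \eqref{capacity_region4} and bound the aggregate gain $\sum_k|\mathbf f_k^H\mathbf v|^2\le M\gamma^\star$ (each term capped by $\gamma^\star$, cf.\ \eqref{C17}). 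Since that constraint yields a time decreasing in the aggregate gain, $\tau_{\rm N}^{\rm c}$ is lower bounded by the value $\tau^{\rm sum}$ solving $B\tau\log_2(1+M\bar E^{\rm c}\gamma^\star/(\tau B\sigma^2))=Ms$. The decisive observation is the exact scaling $f(M\tau^{\rm c},M\phi)=M\,f(\tau^{\rm c},\phi)$ with $\phi=\bar E^{\rm c}\gamma^\star/(B\sigma^2)$, which forces $\tau^{\rm sum}=M\tau^{\rm c}$; hence $\tau_{\rm N}^{\rm c}\ge M\tau^{\rm c}$ and I-TDMA wins.

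\emph{Main obstacle.} The crux is to disentangle the two mechanisms that differ across schemes—the dedicated-versus-shared reflection (which sets the channel gains) and the orthogonal-versus-superposed access (which sets the multiplexing). The technical heart is the pair of facts about $f(x,y)=x\log_2(1+y/x)$: superadditivity delivers the rate-region containment in the first direction, while the degree-one scaling exactly cancels NOMA's multiplexing gain against TDMA's $M$-fold time in the second, leaving the residual phase-misalignment loss $|\mathbf f_k^H\mathbf v|^2<\gamma^\star$ as the term that tips the balance. Establishing the region containment rigorously (all $2^{M}-1$ subset inequalities) and confirming that the reductions over $\bar\tau_t^{\rm loc}$ and over the common scheduling feasible set do not reverse either inequality are the steps needing the most care.
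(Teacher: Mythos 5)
Your proposal is correct and follows essentially the same route as the paper's proof: reduce to the communication-time comparison under a common scheduled subset and local-computation time, dominate the shared-reflection gains by the dedicated ones via the triangle inequality, and exploit the concavity and degree-one homogeneity of $x\log_2(1+y/x)$ to relate TDMA's orthogonal slots to NOMA's sum-rate constraint. Your version is, if anything, slightly more careful than the paper's in the phase-homogeneous direction, since you verify feasibility of all $2^{M}-1$ subset constraints of the NOMA region at time $T$ rather than arguing only from the single binding subset.
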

\begin{proof}
Under the arbitrarily feasible scheduling device and computational frequency, i.e., $\left\{ {{f_k},{\cal K}_a^t} \right\}$, the remaining energy used for local model uploading for each scheduled device is $E_k^{\rm{c}} = {E^{\max }} - \xi {C_k}{D_k}f_k^2$. As shown in Proposition 1, we have ${f_1} =  \ldots  = {f_K}$ under the homogeneous setup, which further leads to $E_1^{\rm{c}} =  \ldots  = E_K^{\rm{c}}$. Since the computational latency is identical for I-TDMA and I-NOMA under the given $\left\{ {{f_k},{\cal K}_a^t} \right\}$, comparing $\tau _{{\rm{TDMA}}}^*$ and $\tau _{{\rm{NOMA}}}^*$ is equivalent to compare their uploading latency. In the following, we focus on the comparison of uploading latency of the I-TDMA and I-NOMA, denoted by $\tau _{{\rm{T}}}^{\rm{c}}$ and $\tau _{{\rm{N}}}^{\rm{c}}$, under the condition of \eqref{TDMA_NOMA_condition} and \eqref{NOMA_TDMA_condition}, respectively.

We first focus on the case of \eqref{TDMA_NOMA_condition}. Under the optimal IRS phase-shifts $\left\{ {{\bf{\Theta }}_k^*} \right\}$ of the I-TDMA, the equivalent channel power gain of each device is given by
\begin{align}\label{TDMA_channel_gain}
\gamma _k^* \!=\! {\left| {{{\bf{g}}^H}{\bf{\Theta }}_k^*{{\bf{h}}_{r,k}}} \right|^2} \!=\! {\sum\nolimits_{n = 1}^N {\left( {\left| {{{\left[ {\bf{g}} \right]}_n}} \right|\left| {{{\left[ {{{\bf{h}}_{r,k}}} \right]}_n}} \right|} \right)} ^2}.
\end{align}
Based on the condition \eqref{TDMA_NOMA_condition}, it follows that $\gamma _1^* =  \ldots  = \gamma _K^*$. The uploading latency of device $k$, denoted by $\tau _{{\rm{T,}}k}^{\rm{c}}$, satisfies the equation $B\tau _{{\rm{T,}}k}^{\rm{c}}{\log _2}\left( {1 + \frac{{E_k^{\rm{c}}\gamma _k^*}}{{B\tau _{{\rm{T,}}k}^{\rm{c}}{\sigma ^2}}}} \right) = s,\forall k \in {\cal K}_a^t$. Since $E_1^{\rm{c}}\gamma _1^* =  \ldots  = E_K^{\rm{c}}\gamma _K^*$, we obtain $\tau _{{\rm{T,}}k}^{\rm{c}} = \tau _{{\rm{T,}}l}^{\rm{c}},\forall k,l \in {\cal K}_a^t$, which leads to
\begin{align}\label{SNR_condition}
\frac{{E_k^{\rm{c}}\gamma _k^*}}{{B\tau _{{\rm{T,}}k}^{\rm{c}}{\sigma ^2}}} = \frac{{E_l^{\rm{c}}\gamma _l^*}}{{B\tau _{{\rm{T,}}l}^{\rm{c}}{\sigma ^2}}},\forall k,l \in {\cal K}_a^t.
\end{align}
It can be derived from \eqref{SNR_condition} that
\begin{align}\label{SNR_condition2}
&\sum\nolimits_{k \in {\cal K}_a^t} {B\tau _{{\rm{T,}}k}^{\rm{c}}{{\log }_2}\left( {1 + \frac{{E_k^{\rm{c}}\gamma _k^*}}{{B\tau _{{\rm{T,}}k}^{\rm{c}}{\sigma ^2}}}} \right)}\nonumber\\
&\mathop  = \limits^{\left( a \right)} B\tau _{\rm{T}}^{\rm{c}}{\log _2}\left( {1 + \frac{{\sum\nolimits_{k \in {\cal K}_a^t} {E_k^{\rm{c}}\gamma _k^*} }}{{B\tau _{\rm{T}}^{\rm{c}}{\sigma ^2}}}} \right) = \left| {{\cal K}_a^t} \right|s,
\end{align}
where $\tau _{\rm{T}}^{\rm{c}} = \sum\nolimits_{\forall k \in {\cal K}_a^t} {\tau _{{\rm{T,}}k}^{\rm{c}}}$ and (a) holds due to \eqref{SNR_condition}. Let ${{{\bf{\Theta }}^*}}$ denote the optimal IRS phase-shifts for the I-NOMA and ${{\tilde \gamma }_k} = {\left| {{{\bf{g}}^H}{{\bf{\Theta }}^*}{{\bf{h}}_{r,k}}} \right|^2}$. For the I-NOMA, we have $\tau _{\rm{N}}^{\rm{c}} \ge \tilde \tau _{\rm{N}}^{\rm{c}}$, where $\tilde \tau _{\rm{N}}^{\rm{c}}$ satisfies the equation
\begin{align}\label{equation_NOMA}
B\tilde \tau _{\rm{N}}^{\rm{c}}{\log _2}\left( {1 + \frac{{\sum\nolimits_{k \in {\cal K}_a^t} {E_k^{\rm{c}}{{\tilde \gamma }_k}} }}{{B\tilde \tau _{\rm{N}}^{\rm{c}}{\sigma ^2}}}} \right) = \left| {{\cal K}_a^t} \right|s.
\end{align}
Since ${{\tilde \gamma }_k} \le \gamma _k^*$ holds, we obtain $\tilde \tau _{\rm{N}}^{\rm{c}} \ge \tau _{\rm{T}}^{\rm{c}}$ by comparing \eqref{SNR_condition2} and \eqref{equation_NOMA}. Hence, $\tau _{\rm{T}}^{\rm{c}} \le \tilde \tau _{\rm{N}}^{\rm{c}} \le \tau _{\rm{N}}^{\rm{c}}$ naturally holds, which thus indicates that $\tau _{{\rm{TDMA}}}^* \le \tau _{{\rm{NOMA}}}^*$ under the condition of \eqref{TDMA_NOMA_condition}.

Next, we focus on the case of \eqref{NOMA_TDMA_condition}. When \eqref{NOMA_TDMA_condition} is satisfied, it is obvious that ${\bf{\Theta }}_1^* =  \ldots  = {\bf{\Theta }}_K^* = {{\bf{\Theta }}^*}$, which leads to $\gamma _k^* = {{\tilde \gamma }_k},\forall k$. The uploading latency of I-NOMA satisfies the equation $B\tau _N^{\rm{c}}{\log _2}\left( {1 + \frac{{\sum\nolimits_{k \in \tilde {\cal K}_a^t} {E_k^{\rm{c}}{{\tilde \gamma }_k}} }}{{B\tau _N^{\rm{c}}{\sigma ^2}}}} \right) = \left| {\tilde {\cal K}_a^t} \right|s$, where ${\tilde {\cal K}_a^t}$ is one subset of ${\cal K}_a^t$. Let $\tilde \tau _{\rm{T}}^{\rm{c}} = \sum\nolimits_{k \in \tilde {\cal K}_a^t} {\tau _{{\rm{T}},k}^{\rm{c}}}$ and then we have
\begin{align}\label{relaton_equality}
\left| {\tilde {\cal K}_a^t} \right|s &= \sum\nolimits_{k \in \tilde {\cal K}_a^t} {B\tau _{{\rm{T}},k}^{\rm{c}}{{\log }_2}\left( {1 + \frac{{E_k^{\rm{c}}\gamma _k^*}}{{B\tau _{{\rm{T}},k}^{\rm{c}}{\sigma ^2}}}} \right)}\nonumber\\
&\mathop  \le \limits^{\left( a \right)} B\sum\nolimits_{k \in \tilde {\cal K}_a^t} {\tau _{{\rm{T}},k}^{\rm{c}}{{\log }_2}\left( {1 + \frac{{\sum\nolimits_{k \in \tilde {\cal K}_a^t} {E_k^{\rm{c}}{{\tilde \gamma }_k}} }}{{B{\sigma ^2}\sum\nolimits_{k \in \tilde {\cal K}_a^t} {\tau _{{\rm{T}},k}^{\rm{c}}} }}} \right)}\nonumber\\
& = B\tilde \tau _{\rm{T}}^{\rm{c}}{\log _2}\left( {1 + \frac{{\sum\nolimits_{k \in \tilde {\cal K}_a^t} {E_k^{\rm{c}}{{\tilde \gamma }_k}} }}{{B\tilde \tau _{\rm{T}}^{\rm{c}}{\sigma ^2}}}} \right),
\end{align}
where (a) holds because $x{\log _2}\left( {1 + y/x} \right)$ is joint a concave function with respect to $x$ and $y$. From \eqref{relaton_equality}, we obtain $\tilde \tau _{\rm{T}}^{\rm{c}} \ge \tau _N^{\rm{c}}$. Since $\tilde {\cal K}_a^t \subseteq {\cal K}_a^t$ and $\tau _{\rm{T}}^{\rm{c}} = \sum\nolimits_{k \in {\cal K}_a^t} {\tau _{{\rm{T}},k}^{\rm{c}}}$, $\tau _N^{\rm{c}} \le \tilde \tau _{\rm{T}}^{\rm{c}} \le \tau _{\rm{T}}^{\rm{c}}$ naturally holds, which thus completes the proof.
\end{proof}

Theorem 3 implies that I-TDMA outperforms I-NOMA provided that condition \eqref{TDMA_NOMA_condition} is satisfied. Note that condition \eqref{TDMA_NOMA_condition} refers to the \emph{power homogeneous setting}, which holds in practice when all devices are located on a half-circle centered at the IRS under the LoS channel setup. In contrast, \eqref{NOMA_TDMA_condition} in Theorem 3 serves as a sufficient condition for that I-NOMA outperforms I-TDMA. In particular, condition \eqref{NOMA_TDMA_condition} refers to the \emph{phase homogeneous setting}, which holds in practice when all devices are located in a line passing through the IRS. By considering the nature of IRS, the results in Theorem 3 are fundamentally different from the conclusions in previous works, e.g., \cite{mo2021energy, bouzinis2022wireless}, which demonstrated that NOMA always outperforms OMA in terms of latency of wireless FL system without IRS.
\vspace{-10pt}
\section{Simulation Results}
In this section, simulation results are provided to validate our analytical results (Theorems 2 and 3) and demonstrate the usefulness of using IRS to improve the performance of wireless FL. We consider a two dimensional coordinate setup measured in meter (m), where the AP and IRS are located at $\left( {0,0} \right)$ m and $\left( {100,5} \right)$ m, respectively. The devices are distributed in the vicinity of the IRS and the detailed distributions will be specified in the following discussions. For the involved channels, the pathloss exponents for both the AP-IRS and IRS-device links are set to $2$, whereas the pathloss exponent of the direct AP-device link is set to $3.4$. For each individual link, the pathloss at the reference distance of 1 m is set to 30 dB. Unless otherwise stated, other parameters are set as follows: $B = 10$ MHz, $B{\sigma ^2} =  - 80$ dBm, $\zeta  = {10^{ - 27}}$, $s = 1$ Mbit, and ${C_k} = 10$ cycles.
\vspace{-10pt}
\subsection{Communication Latency Comparison}
In this subsection, we focus on the communication latency to provide an in-depth discussion regarding the performance comparison among I-TDMA, I-FDMA, and I-NOMA under various types of channel setups. To show it clearly, we consider 10 devices, i.e., $K = 10$, to upload their models to the AP and aim to minimize the associated communication latency of different multiple access schemes.

\begin{figure}[!t]
 \centerline{\includegraphics[width=2.4in]{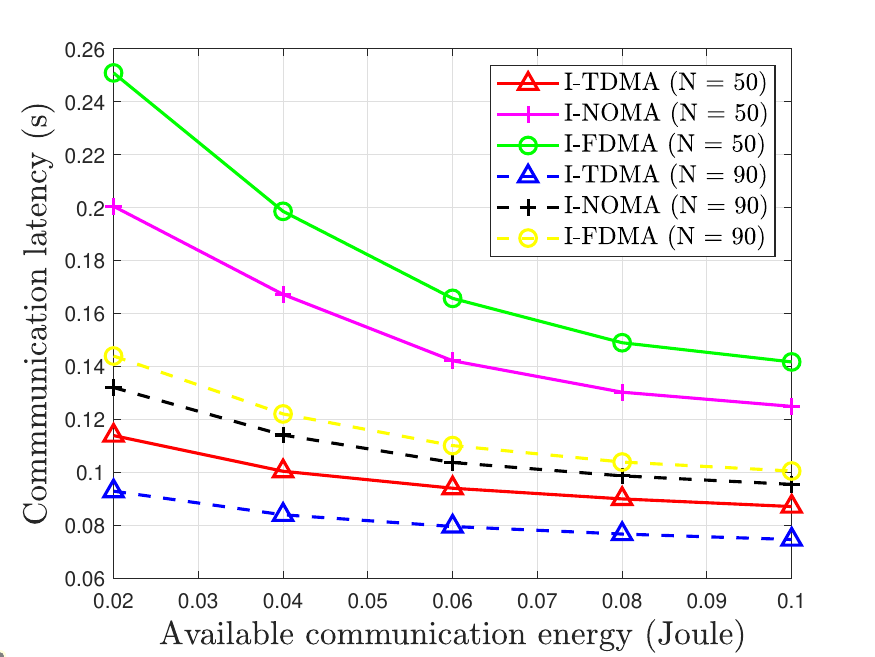}}
 \caption{Communication latency versus communication energy under the power homogeneous setting.}
 \label{C_latency1}
 \vspace{-8pt}
\end{figure}

\subsubsection{Power Homogeneous Setting}
We first study the performance comparison between the OMA and NOMA under the power homogeneous setting. In a power homogenous setting, we consider a pure-LoS channel setup and the AP-device direct links are blocked. Besides, all devices are uniformly distributed at a half-circle centered at the IRS with a radius of $10$ m, which leads to condition \eqref{TDMA_NOMA_condition}. In Fig. \ref{C_latency1}, we plot the achieved communication latency by I-TDMA, I-NOMA, and I-FDMA versus the available communication energy under this power homogeneous setting. Apparently, it is observed that the communication latency of all schemes can be significantly reduced by increasing the available communication energy or the number of IRS elements. Moreover, one can observe that the latency achieved by I-TDMA is lower than those achieved by I-NOMA and I-FDMA. This is due to the dynamic adjustment of IRS phase-shifts to cater for the channels of different devices by exploiting time-selectivity of IRS. Besides, I-FDMA always achieves a higher communication latency than the other two counterparts. The results in Fig. \ref{C_latency1} validates the analysis in Theorems 2 and 3.

\begin{figure}[!t]
 \centerline{\includegraphics[width=2.4in]{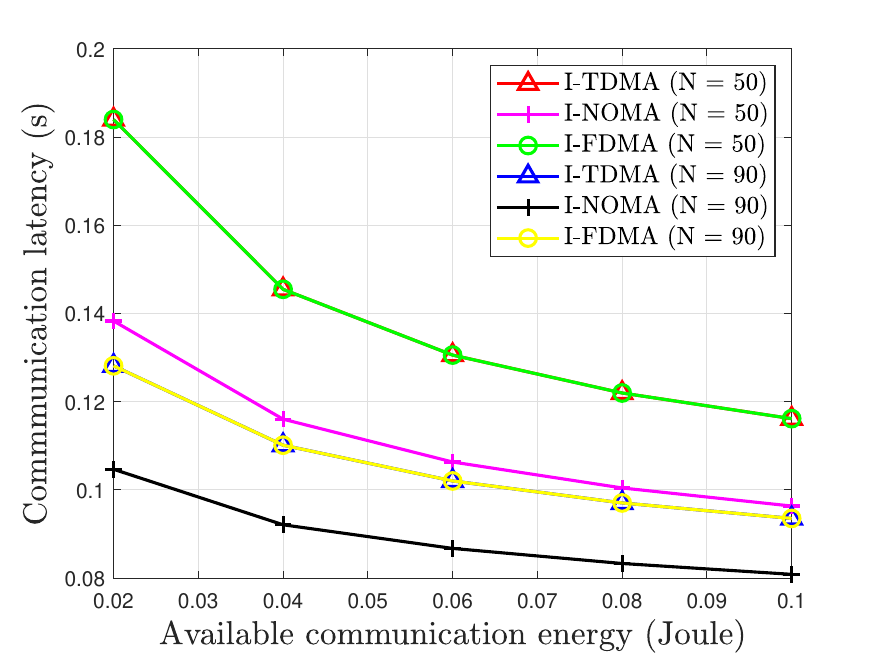}}
 \caption{Communication latency versus communication energy under the phase homogeneous setting.}
 \label{C_latency2}
 \vspace{-8pt}
\end{figure}

\subsubsection{Phase Homogeneous Setting}
Then, we further investigate the communication latency of different multiple access schemes under the phase homogeneous setting. In a phase homogenous setting, a pure LoS channel setup is considered and AP-device links are blocked. All devices are uniformly distributed at a line segment characterized by $x = 100, - 30 \le y \le 0$, which leads to condition \eqref{NOMA_TDMA_condition}. Under this setting, we plot the communication latency versus the available communication energy in Fig. \ref{C_latency2}. First, it is observed that the I-TDMA and I-FDMA achieves the identical communication latency, which is in accordance with Theorem 2. In a phase homogeneous setting, the cascade channels of all devices share the same phase and thus the optimal IRS phase-shift matrixes to maximize each device's channel power gain is identical. Under the identical channel setup by setting a set of common IRS phase-shifts, I-TDMA achieves the same communication latency as that of I-FDMA. Moveover, we observe that I-NOMA achieves a lower communication latency than the other two counterparts, which agrees the analysis in Theorem 3. The result is expected since NOMA is a capacity achieving scheme for the uplink transmission under the static channel setups.

\subsubsection{General Setting}
\begin{figure}[!t]
 \centerline{\includegraphics[width=2.4in]{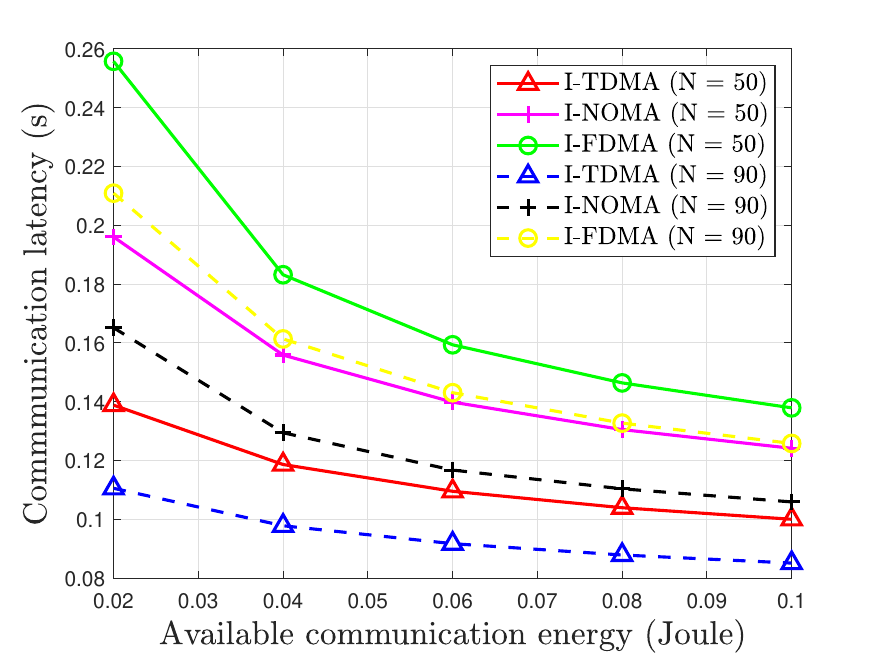}}
 \caption{Communication latency versus communication energy under the general setting.}
 \label{C_latency3}
 \vspace{-8pt}
\end{figure}

Finally, we examine the communication latency under a general channel setting. In this setup, all devices are uniformly and randomly distributed within a radius of $20$ m centered at the IRS. Furthermore, Rician fading with Rician factor of $3$ dB is used as the small-scale fading for all channels. The communication latency versus the available communication latency is depicted in Fig. \ref{C_latency3} under this setup. It is observed from Fig. \ref{C_latency3} that I-TDMA outperforms I-NOMA and I-FDMA in terms of the communication latency. The reason is that the both the I-NOMA and I-FDMA schemes use a single set of  IRS phase-shifts for assisting model uploading of multiple devices, which is less flexible for channel reconfiguration. In contrast, multiple IRS reflection patterns employed in the I-TDMA scheme provides more degree of freedoms for further enhancing the passive beamforming gain for each individual device. The higher passive beamforming gain attained for the I-TDMA becomes a more dominant factor to reduce the communication latency. The results confirm the superiority of I-TDMA due to the dynamic IRS beamforming.

\vspace{-10pt}
\subsection{IRS Enhanced Wireless FL}
In this subsection, we evaluate the usefulness of IRS for enhancing the performance of wireless FL in terms of the per-round latency and test accuracy. To show the performance of the proposed design for handling specific FL tasks, the image classifier model on the widely used MNIST datasets is trained. In particular, we consider $K=20$ devices are randomly distributed within a radius of 20 m centered at the IRS and the channel model is set as that in the general setting of the previous subsection. For the imbalanced data numbers, we divide $20$ devices into two parts, and they have 1000 and 2000 samples for each device, respectively. Based on the superiority of I-TDMA over I-NOMA and I-FDMA demonstrated in the previous subsection, we adopt the I-TDMA to represent our proposed design.

The proposed design is compared with the following benchmark schemes: 1) Full scheduling: All devices are scheduled in each training round, i.e., $a_k^t = 1,\forall k$, and the remaining variables are optimized; 2) Random IRS phase-shifts: Remaining varialbes are optimized under the random IRS phase-shifts; 3) SNR-based scheduling: The device scheduling is performed based on SNR to satisfy constraint \eqref{weight_deivice} and the remaining variables are optimized by using the proposed algorithm. 4) Without IRS: Resource allocation and device scheduling are optimized without the IRS.

\begin{figure}[!t]
 \centerline{\includegraphics[width=2.4in]{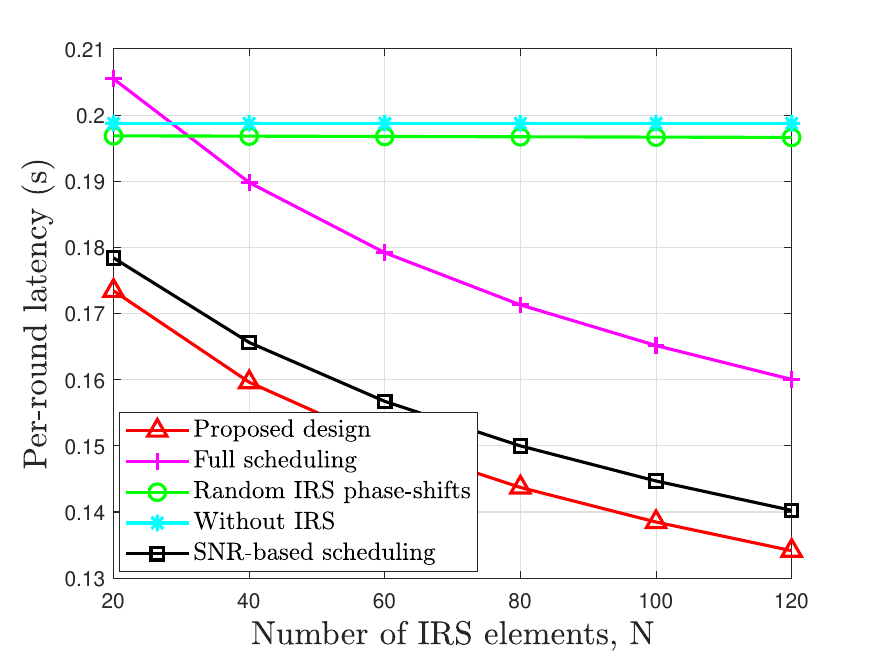}}
 \caption{Per-round latency versus $N$ with $E = 0.1$ Joule.}
 \label{round_latency1}
 \vspace{-8pt}
\end{figure}

\subsubsection{IRS for Reducing Training Latency}
In Fig. \ref{round_latency1}, we plot the per-round latency versus $N$ with the training loss $\nu  = 0.15$, which is defined by $\nu  = \sum\nolimits_{k = 1}^K {\left( {1 - a_k^t} \right){D_k}} /D$.  First, it is observed that the per-round latency achieved by the schemes with IRS phase-shifts optimization monotonically decreases with $N$. The reasons are two-folds. On the one hand, deploying more IRS elements helps achieve a higher passive beamforming gain, which improves the uplink spectral efficiency. On the other hand, a higher passive beamforming gain reduces the demand for communication energy, which leaves more energy allocated to local model training to reduce the computational latency. Besides, the per-round latency with random IRS phase-shifts is not sensitive to increasing $N$ and its gain over the scheme without IRS is marginal, whereas the scheme with full scheduling even performs worse than the scheme without IRS for small $N$, but significantly outperforms the scheme with random phase-shifts for large $N$. The result is expected since a higher passive beamforming gain helps compensate the additional latency incurred by scheduling more devices. Moreover, our proposed design outperforms the scheme with SNR-based scheduling since both the issues of wireless channel and data volume are considered.

\begin{figure}[!t]
 \centerline{\includegraphics[width=2.4in]{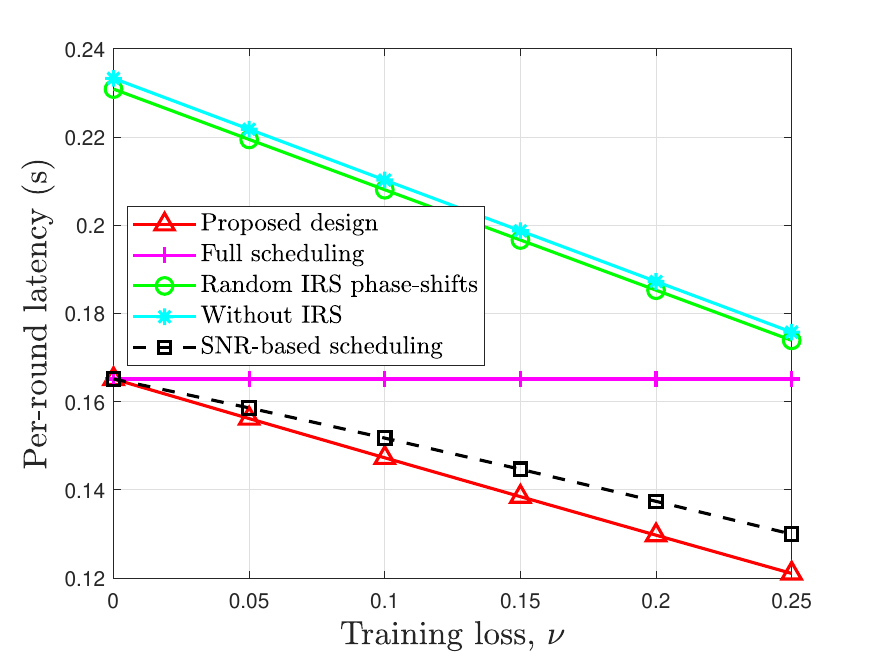}}
 \caption{Per-round latency versus the training loss $\nu $ with $E = 0.1$ Joule.}
 \label{round_latency2}
 \vspace{-8pt}
\end{figure}

Fig. \ref{round_latency2} shows the impact of the target training loss $\nu$ on the per-round latency when $N=100$. As expected, the performance of full scheduling  is invariant to $\nu  $ since the number scheduled device is fixed under the full scheduling scheme. For the other schemes considering device scheduling, the per-round latency monotonically decreases with $\nu  $, which implies a fundamental tradeoff between the learning latency and learning accuracy. Besides, it is observed that the latency achieved by the full scheduling is still lower than that of the scheme with random phase-shifts even when  $\nu = 0.15$, which underscores the usefulness of using IRS to enhance the latency-accuracy tradeoff in wireless FL systems. Moreover, the proposed design significantly outperforms other benchmark schemes, which highlights the importance of the joint optimization of IRS phase-shifts and device scheduling.

\subsubsection{IRS Enhanced Training Accuracy}
Then, we discuss on using IRS to improve the training accuracy of wireless FL under the given per-round latency. The results are obtained by solving problem \eqref{C18} in Section IV-D. Under the target per-round latency $\bar \tau  = 0.15~{\rm{s}}$, we study the impact of the number of IRS elements on the average number of scheduling devices in Fig. \ref{learning_quality1}. By increasing the number of IRS elements, the system is capable of scheduling more devices due to the attained higher passive beamforming gain, which is beneficial for improving data exploitation. In particular, it is observed the energy budget of devices for achieving a full scheduling can be reduced from 0.4 Joule to 0.2 Joule by increasing $N$ from $N = 60$ to $N = 120$. The result implies that IRS is able to support low-energy devices to achieve a high-quality learning performance.

\begin{figure}[!t]
 \centerline{\includegraphics[width=2.4in]{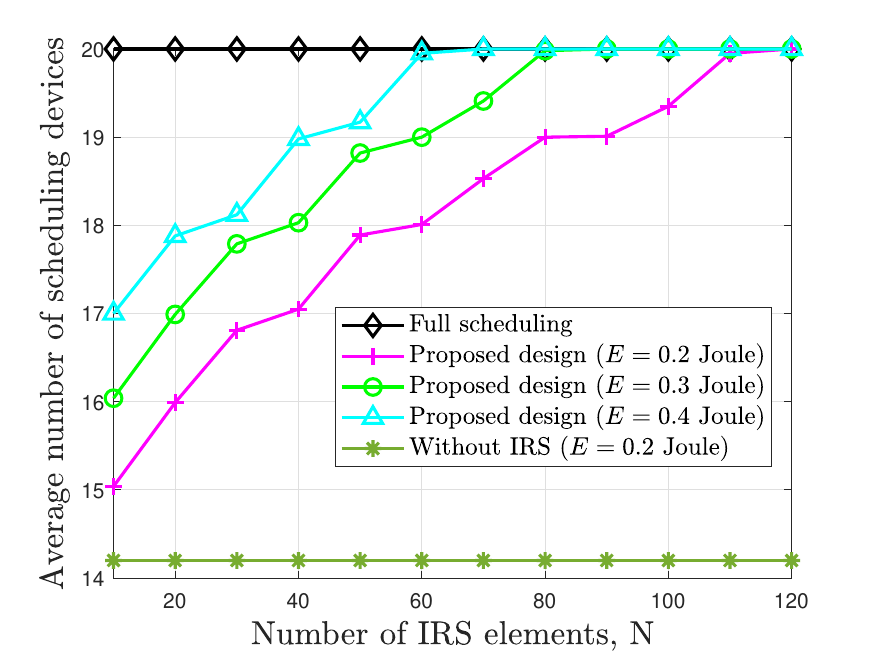}}
 \caption{The number of scheduling devices versus $N$.}
 \label{learning_quality1}
 \vspace{-8pt}
\end{figure}

\begin{figure}[!t]
 \centerline{\includegraphics[width=2.4in]{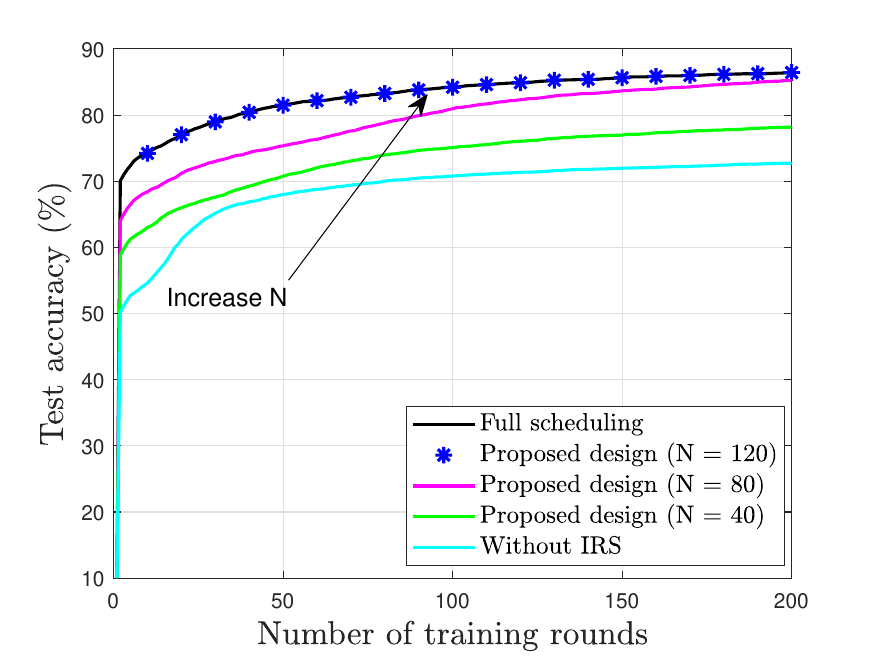}}
 \caption{Test accuracy versus the number of training rounds.}
 \label{learning_quality2}
  \vspace{-8pt}
\end{figure}

To illustrate the performance of the proposed joint design for handling FL tasks, a image classifier model is trained on the widely-adopted MNIST dataset. We plot the test accuracy versus the number of training rounds in Fig. \ref{learning_quality2}. Note that the scheme of full scheduling represents an upper bound of test accuracy since the model aggregation error is not introduced. It is observed that the test accuracy of the proposed design approaches that achieved by the full scheduling scheme as the increase of $N$, which is expected since more devices are scheduled to contribute their data exploitation, thereby improving the test accuracy significantly. The result confirms the analysis in Proposition 4, which demonstrates the usefulness of deploying IRS to achieve a lossless FL model.
\vspace{-10pt}
\section{Conclusion}
This paper proposed three transmission protocols, namely I-TDMA, I-FDMA, and I-NOMA for the local model uploading in IRS aided wireless FL systems. Under the proposed three protocols, we jointly optimized the IRS phase-shifts and resource allocation to minimize the per-round latency. The proposed optimization framework was further extended to the training loss minimization problem under the given latency requirement to shed light on the fundamental tradeoff between the learning accuracy and learning latency. The required number of IRS elements to enable a full scheduling was derived. Then, we provided a theoretical framework to compare the performance of the proposed three protocols. Both the preferable channel settings for the I-NOMA and I-TDMA were unveiled, which is fundamentally different from that NOMA always outperforms OMA in wireless FL systems without IRS.


\bibliographystyle{IEEEtran}
\vspace{-10pt}
\bibliography{IEEEabrv,myref}


\end{document}